\tikzstyle{vertex}=[draw, circle, fill, inner sep = 2.4pt]
\newtheorem{rrule}[theorem]{Reduction Rule}
\crefname{rrule}{Reduction Rule}{Reduction Rules}
\newtheorem{brule}[theorem]{Branching Rule}
\crefname{brule}{Branching Rule}{Branching Rules}
\newtheorem{term}[theorem]{Termination}
\crefname{term}{Termination}{Termination}
\newcommand{\VC}{\ensuremath{\mathsf{VC}}}
\newcommand{\IS}{\ensuremath{\mathsf{IS}}}
\newcommand{\MM}{\ensuremath{\mathsf{MM}}}
\newcommand{\IM}{\ensuremath{\mathsf{IM}}}
\newcommand{\UB}{\ensuremath{\mathsf{UB}}}
\newcommand{\LP}{\ensuremath{\mathsf{LP}}}
\newcommand{\Ostar}{\ensuremath{O^{\star}}}
\newcommand{\problem}[4]{
    \vspace{.8ex}
    {\centering
        \fbox{~\begin{minipage}{.97\textwidth}
            \vspace{2pt} 
            \noindent
            \normalsize\textsc{#1}
            \vspace{1.5pt}

            \setlength{\tabcolsep}{1pt}
            \renewcommand{\arraystretch}{1.0}
            \begin{tabularx}{\textwidth}{@{}lX@{}}
                \normalsize\textbf{Input:}       & \normalsize#2 \\
                \normalsize\textbf{Question:}    & \normalsize#3 \\
                \normalsize\textbf{Parameter: }  & \normalsize#4
            \end{tabularx}
        \end{minipage}}
    }
    \vspace{1.2ex}
}
\title{Induced Matching below Guarantees: Average Paves the Way for Fixed-Parameter Tractability}
\titlerunning{Induced Matching below Guarantees}
\author{Tomohiro Koana}{Technische Universität Berlin, Faculty~IV, Institute of Software Engineering and Theoretical Computer Science, Algorithmics and Computational Complexity}{tomohiro.koana@tu-berlin.de}{0000-0002-8684-0611}{Supported by the DFG Project DiPa, NI 369/21.}
\authorrunning{T. Koana}
\keywords{Parameterized Complexity, Below Guarantees, Induced Matching, Gallai-Edmonds Decomposition}
\begin{document}

\maketitle

\begin{abstract}
In this work, we study the \textsc{Induced Matching} problem:
Given an undirected graph $G$ and an integer $\ell$, is there an induced matching $M$ of size at least $\ell$?
An edge subset $M$ is an induced matching in $G$ if $M$ is a matching such that there is no edge between two distinct edges of $M$.
Our work looks into the parameterized complexity of \textsc{Induced Matching} with respect to ``below guarantee'' parameterizations.
We consider the parameterization $u - \ell$ for an upper bound $u$ on the size of any induced matching.
For instance, any induced matching is of size at most $n / 2$ where $n$ is the number of vertices, which gives us a parameter $n / 2 - \ell$.
In fact, there is a straightforward $9^{n/2 - \ell} \cdot n^{O(1)}$-time algorithm for \textsc{Induced Matching} [Moser and Thilikos, J. Discrete Algorithms].
Motivated by this, we ask:
Is \textsc{Induced Matching} FPT for a parameter smaller than $n / 2 - \ell$?
In search for such parameters, we consider $\MM(G) - \ell$ and $\IS(G) - \ell$, where $\MM(G)$ is the maximum matching size and $\IS(G)$ is the maximum independent set size of $G$.
We find that \textsc{Induced Matching} is presumably not FPT when parameterized by $\MM(G) - \ell$ or $\IS(G) - \ell$.
In contrast to these intractability results, we find that taking the average of the two helps---our main result is a branching algorithm that solves \textsc{Induced Matching} in $49^{(\MM(G) + \IS(G))/ 2 - \ell} \cdot n^{O(1)}$ time.
Our algorithm makes use of the Gallai--Edmonds decomposition to find a structure to branch on.
\end{abstract}

\section{Introduction}

A matching in a graph is a set of pairwise non-incident edges.
An induced matching is a matching such that no edge is incident with two edges from the matching.
The notion of induced matchings was initially introduced by Stockmeyer and Vazirani \cite{DBLP:journals/ipl/StockmeyerV82}.
Since then, the \textsc{Induced Matching} problem---given an undirected graph $G$ and an integer $\ell \in \mathbb{N}$, we are to determine whether $G$ has an induced matching of size $\ell$---has been studied extensively.
This problem is NP-hard, which was proven independently by Stockmeyer and Vazirani \cite{DBLP:journals/ipl/StockmeyerV82} and Cameron~\cite{DBLP:journals/dam/Cameron89}.
The NP-hardness persists on restricted graph classes, such as bipartite graphs of vertex degree at most three~\cite{DBLP:journals/ipl/Lozin02} and cubic planar graphs~\cite{DBLP:journals/jda/DuckworthMZ05}.
On the positive side, \textsc{Induced Matching} is polynomial-time solvable on trees \cite{DBLP:conf/wg/Zito99}, chordal graphs \cite{DBLP:journals/dam/Cameron89}, weakly chordal graphs~\cite{DBLP:journals/dm/CameronST03}, circular-arc graphs \cite{DBLP:journals/dam/GolumbicL93}, comparability graphs \cite{DBLP:journals/dam/GolumbicL00}, and AT-free graphs~\cite{DBLP:journals/dm/Cameron04}.

In this work, we study the parameterized complexity of \textsc{Induced Matching}.
The standard parameterization of \textsc{Induced Matching} takes the solution size $\ell$ as the parameter.
For $\ell$, \textsc{Induced Matching} is W[1]-hard, which can be easily seen by a parameterized reduction from the W[1]-hard \textsc{Independent Set} problem.
Indeed, the W[1]-hardness holds even on bipartite graphs~\cite{DBLP:journals/dam/MoserS09}.
Perhaps for this reason, many researchers have investigated the parameterized complexity of \textsc{Induced Matching} from multivariate perspectives.
Moser and Sikdar~\cite{DBLP:journals/dam/MoserS09} gave a kernel of size $O(\Delta^3 \ell)$ for \textsc{Induced Matching} for the maximum degree $\Delta$.
Erman et al.~\cite{DBLP:journals/dam/ErmanKKW10} and Kanj et al.~\cite{DBLP:journals/jcss/KanjPSX11} independently found that \textsc{Induced Matching} admits a polynomial kernel of size $\ell^{O(d)}$ on $d$-degenerate graphs.
This result was later complemented by Cygan et al.~\cite{DBLP:journals/talg/CyganGH17}, who showed that the kernel size is basically tight---there is no kernel of size $\ell^{o(d)}$ under standard complexity assumptions.
Recently, it was shown that \textsc{Induced Matching} on $c$-closed graphs~\cite{DBLP:conf/esa/KoanaKS20} and on weakly $\gamma$-closed graphs~\cite{DBLP:conf/isaac/KoanaKS21} has a kernel with at most $O(c^7 \ell^8)$ vertices and $\ell^{O(\gamma)}$ vertices, respectively.
We remark that the parameterized complexity on planar graphs has also received considerable attention~\cite{DBLP:journals/jda/MoserT09,DBLP:journals/dam/ErmanKKW10,DBLP:journals/jcss/KanjPSX11}.

In this paper, we adopt an alternative approach to tackle the fixed-parameter tractability of \textsc{Induced Matching}---using \emph{above guarantees} and \emph{below guarantees} \cite{DBLP:journals/jal/MahajanR99,DBLP:journals/jcss/MahajanRS09} (see also a very recent survey \cite{DBLP:journals/corr/abs-2207-12278}).
This work particularly concerns below guarantees.
For a graph $G$, let $\IM(G)$ denote the maximum induced matching size of $G$.
In a nutshell, we employ $\UB(G) - \ell$ as a parameter where $\UB(G)$ is a function on $G$ that upper-bounds $\IM(G)$, i.e., $\UB(G) \ge \IM(G)$ on every graph $G$.

In this spirit, we first consider a trivial upper bound.
For the number $n$ of vertices, $\frac{1}{2} n$ clearly constitutes an upper bound on $\IM(G)$.
We refer to the parameterized problem arising from this upper bound $\UB(G) = \frac{1}{2} n$ as \textsc{Induced Matching Below Trivial Guarantee}:

\problem
{Induced Matching Below Trivial Guarantee (IMBTG)}
{An undirected graph $G$ and an integer $\ell$.}
{Does $G$ have an induced matching of size $\ell$?}
{$k = \frac{1}{2} n - \ell$}

This problem has been studied in the literature, albeit under different names:
Moser and Thilikos \cite{DBLP:journals/jda/MoserT09} gave an algorithm solving \textsc{IMBTG} in $O^{\star}(9^{k})$ time.\footnote{The $O^{\star}$ notation suppresses the polynomial factor in the input size.}
Subsequently, Xiao and Kou \cite{DBLP:journals/tcs/XiaoK20} developed an algorithm running in $O^{\star}(3.1845^k)$ time.
In terms of kernelization, a kernel with $O(k^2)$ vertices was given by Moser and Thilikos \cite{DBLP:journals/jda/MoserT09}.
Later, Xiao and Kou \cite{DBLP:journals/tcs/XiaoK20} gave an improved kernel with $O(k)$ vertices.

In parameterized complexity, whenever the fixed-parameter tractability with respect to a parameter $k$ is discovered, one asks whether the ``boundary of tractability'' can be taken further, that is,  fixed-parameter tractability is achievable for a parameter $k'$ smaller than $k$ (i.e., $k' \le g(k)$ for some function $g$). (An analogous question arises when W-hardness for $k$ is discovered as well---does fixed-parameter parameter tractability hold for a parameter $k'$ larger than $k$?)
This question appears prominently in  multivariate algorithmics \cite{DBLP:conf/mfcs/KomusiewiczN12,DBLP:conf/stacs/Niedermeier10} and structural parameterizations~\cite{DBLP:conf/esa/Belmonte0LMO20,DBLP:journals/ejc/FellowsJR13}.
We ask ourselves this type of question for \textsc{Induced Matching} parameterized by below guaranteed values.
More precisely, the main question we challenge in this work is the following:
\vspace{-1ex}
\begin{quote}
($\star$) Is \textsc{Induced Matching} FPT for a parameterization smaller than that of \textsc{IMBTG}?
\end{quote}
\vspace{-1ex}

Let us remark we are not the first to address this kind of question in the context of above and below guarantee parameterizations:
The \textsc{Vertex Cover} problem---given an undirected graph $G$ and an integer $\ell$, decide whether there is a set of at most $\ell$ vertices that is incident with every edge---is one of the problems where this kind of question was considered.
A simple branching algorithm solves \textsc{Vertex Cover} in $\Ostar(2^\ell)$ time.
For any graph $G$, it holds that $\VC(G) \ge \LP(G) \ge \MM(G)$, where $\VC(G)$, $\MM(G)$, and $\LP(G)$ denote the minimum vertex cover size, the maximum matching size, and the optimum of the linear programming relaxation of \textsc{Vertex Cover}, respectively.
These inequalities give rise to the above-guarantee parameterizations, $\ell - \MM(G)$ and $\ell - \LP(G)$.
These parameterizations have been extensively studied \cite{DBLP:journals/toct/CyganPPW13,DBLP:conf/soda/IwataOY14,DBLP:journals/jacm/KratschW20,DBLP:journals/talg/LokshtanovNRRS14,DBLP:conf/stacs/NarayanaswamyRRS12,DBLP:conf/esa/RamanRS11,DBLP:journals/jcss/RazgonO09}---
it has been shown that \textsc{Vertex Cover} FPT is with respect to $\ell - \MM(G)$ as well as $\ell - \LP(G)$.
Notably, the border of tractability was further extended to $\ell - (2 \LP(G) - \MM(G))$ \cite{DBLP:conf/soda/GargP16,DBLP:journals/siamdm/Kratsch18}.
Let us also remark that the above guarantee parameterizations of \textsc{Max Cut} and related problems have been also extensively studied \cite{DBLP:journals/tcs/CrowstonGJM13,DBLP:journals/algorithmica/CrowstonJM15,DBLP:journals/algorithmica/EtscheidM18,DBLP:journals/mst/MadathilSZ20,DBLP:journals/jcss/MnichPSS14,DBLP:journals/jal/MahajanR99,DBLP:conf/wg/MnichZ12}.

Now we come back to the question ($\star$) on \textsc{Induced Matching}.
We wish to find an upper bound on the maximum induced matching size which is stricter than $\frac{1}{2} n$.
The maximum matching size $\MM(G)$ serves as such a bound:
Clearly, $\MM(G) \le \frac{1}{2} n$.
Moreover, $\MM(G)$ is an upper bound on $\IM(G)$, since every induced matching is a matching.
This leads to the following parameterized problem:

\problem
{Induced Matching Below Maximum Matching (IMBMM)}
{An undirected graph $G$ and an integer $\ell$.}
{Does $G$ have an induced matching of size $\ell$?}
{$k = \MM(G) - \ell$}

As graphs $G$ with $\MM(G) = \IM(G)$ have been of theoretical interest, there are known results on \textsc{IMBMM}:
Kobler and Rotics~\cite{DBLP:journals/algorithmica/KoblerR03} gave a polynomial-time algorithm for $k = 0$.
Cameron and Walker \cite{DBLP:journals/dm/CameronW05a} extended this result by providing a structural characterization of graphs $G$ with $\MM(G) = \IM(G)$.
Later, Duarte et al.~\cite{DBLP:journals/tcs/DuarteJPRS15} developed an algorithm for \textsc{IMBMM} that runs in $n^{O(k)}$ time.
However, it has been left open whether \textsc{IMBMM} is FPT.
Filling the gap, we prove in \Cref{sec:hard} that \textsc{IMBMM} is W[2]-hard, i.e., presumably not FPT.
This implies that taking $\MM(G)$ as the upper bound falls short to answer the question ($\star$).

Next, we consider another natural upper bound: the maximum independent set size $\IS(G)$.
Since an induced matching of size $\ell$ contains an independent set of size $\ell$, any graph satisfies $\IS(G) \ge \IM(G)$, which gives the following parameterization:

\problem
{Induced Matching Below Independent Set (IMBIS)}
{An undirected graph $G$ and an integer $\ell$.}
{Does $G$ have an induced matching of size $\ell$?}
{$k = \IS(G) - \ell$}

Although $\frac{1}{2} n$ and $\IS(G)$ are incomparable in general (consider complete graphs and empty graphs), the parameter of \textsc{IMBIS} is essentially smaller compared to that of \textsc{IMBTG}:
Observe that for $\ell > 0$, the graph $G'$ on $2n$ vertices obtained from $G$ by adding $n$ vertices adjacent to all other vertices fulfills $\IM(G') \ge \ell$ if and only if $\IM(G) \ge \ell$.
The maximum independent set size of $G'$ is at most $n$---half the number of vertices in $G'$.
Thus, if \textsc{IMBIS} was fixed-parameter tractability, then fixed-parameter tractability of \textsc{IMBTG} would follow, answering our question ($\star$).
We find, however, that \textsc{IMBIS} is NP-hard for $k = 0$ even if an independent set of size $\ell$ is provided as part of the input.

Somewhat dismayed by the previous two negative results, we look into the upper bound obtained by taking the average of $\MM(G)$ and $\IS(G)$, that is, $\frac{1}{2}(\MM(G) + \IS(G))$.
For any graph $G$, we have
\begin{align*}
  \frac{1}{2} (\MM(G) + \IS(G)) - \IM(G) = \frac{1}{2}(\MM(G) - \IM(G)) + \frac{1}{2} (\IS(G) - \IM(G)) \ge 0,
\end{align*}
implying that $\IM(G) \le \frac{1}{2}(\MM(G) + \IS(G))$.
(This equation also implies that the parameterization by $\frac{1}{2}(\MM(G) + \IS(G)) - \ell$ is larger compared to the parameterization of \textsc{IMBMM} and \textsc{IMBIS}.)
On the other hand, we have $\frac{1}{2}(\MM(G) + \IS(G)) \le \frac{1}{2}(\VC(G) + \IS(G)) \le \frac{1}{2} n$.
Hence, the parameterization obtained from the average is indeed smaller than the trivial below guarantee.
Formally, we study the following:

\problem
{Induced Matching Below Average (IMBA)}
{An undirected graph $G$ and an integer $\ell$.}
{Does $G$ have an induced matching of size $\ell$?}
{$k = \frac{1}{2}(\MM(G) + \IS(G)) - \ell$}

The main result of this work is an FPT algorithm for \textsc{IMBA} that runs in time $\Ostar(49^k)$.

\begin{restatable}{theorem}{main}
  \label{thm:main}
  \textsc{IMBA} can be solved in $\Ostar(49^k)$ time.
\end{restatable}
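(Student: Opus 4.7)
The plan is to design a branching algorithm exploiting the identity $2k = (\MM(G) - \ell) + (\IS(G) - \ell)$, so that a target running time of $\Ostar(49^k) = \Ostar(7^{2k})$ corresponds to at most seven recursive branches each driving $2k$ down by at least one. This would be interleaved with polynomial-time reductions that either shrink the graph or decrease $2k$ without branching.

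As preprocessing, I would reject immediately when $\MM(G) < \ell$ or $\IS(G) < \ell$. Next, delete isolated vertices: this drops $\IS(G)$ by one while leaving $\MM(G)$ and $\ell$ unchanged, so $2k$ falls by one for free. Then greedily commit any isolated edge $uv$ to the induced matching: this drops $\MM$, $\IS$, and $\ell$ each by one, leaving $2k$ fixed while shrinking the graph by two vertices. Once neither rule applies, I would compute the Gallai--Edmonds decomposition $(D, A, C)$ of $G$, invoking the identity $\MM(G) = \tfrac{1}{2}(|V(G)| + |A| - c(D))$, the fact that each component of $G[D]$ is factor-critical (so its independence number is at most $(|V(H)|+1)/2$), and the existence of a perfect matching in $G[C]$.

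The crux is to use this decomposition to locate a small induced subgraph -- plausibly a $P_3$ between an $A$-vertex and two distinct $D$-components, or an edge inside a $D$-component together with its $A$-attachment -- that admits at most seven canonical branches. Each branch is either of the form \emph{commit a specific edge $e$ to the induced matching}, which deletes $N[V(e)]$ and decreases $\ell$ by one while forcing $\MM + \IS$ to drop by at least three; or of the form \emph{commit a vertex to not appear in any induced-matching edge}, in which the Gallai--Edmonds structure forces both $\MM$ and $\IS$ to decrease after its deletion. In every branch, $2k$ is meant to fall by at least one.

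The principal obstacle is precisely this simultaneous decrease of $\MM$ and $\IS$: a generic vertex deletion reduces only one of these, which would be fatal for the parameter. The Gallai--Edmonds decomposition is essential here because vertices of $A$ are saturated by every maximum matching, so their removal provably lowers $\MM$, whereas factor-critical $D$-components, thanks to the $\alpha$-bound above, force $\IS$ to drop whenever we commit certain of their vertices. Pinning down seven structural branches that together cover every possible interaction of an induced matching with the chosen subgraph, and verifying the joint drop case by case, is where the technical work concentrates. The recurrence $T(2k) \leq 7 \cdot T(2k-1) + n^{O(1)}$ then yields the claimed $\Ostar(49^k)$ bound.
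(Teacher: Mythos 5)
Your skeleton matches the paper's at a high level---the same measure $\frac{1}{2}(\MM(G)+\IS(G))-\ell$, the same isolated-vertex and isolated-edge reductions, Gallai--Edmonds as the source of structure, and seven-way branching with $2k$ dropping by one per branch. But the crux is left unconstructed, and the quantitative claims you attach to your two branch types do not hold. For an ``edge-commit'' branch that deletes $N[V(e)]$ and decrements $\ell$ by one, you need $\MM+\IS$ to drop by at least $3$; in general you only get a drop of $2$ (one unit from each), and nothing about the Gallai--Edmonds position of $e$ forces a third unit. The paper avoids edge-commit branches entirely: in the analogue of your rule the third branch deletes only $N(\{u,v\})$, leaving $uv$ as an isolated edge, and argues that $\MM$ drops because $u,v\in A$ are matched into $D$ in every maximum matching. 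Conversely, for vertex-deletion branches you demand that \emph{both} $\MM$ and $\IS$ decrease; this is more than is needed (a drop of one in $\MM$ alone already lowers $2k$ by one) and also misdiagnoses the real obstacle, which is that a naive deletion may lower \emph{neither} quantity (the paper exhibits exactly such an example). The rules that actually work are a case analysis over $C$, over edges inside $A$, and over the components of $G[D]$ (triangles, triangle stars, and larger factor-critical components via a four-vertex path), supported by an additional pendant-triangle reduction rule and, once the graph has become bipartite with $A$ and $D$ both independent, a K\H{o}nig's-theorem argument showing the measure equals $\frac{1}{2}|V(G)|-\ell$ so that naive branching suffices. None of this is pinned down by your sketch.

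The second, independent gap is termination. Your preprocessing step ``reject when $\IS(G)<\ell$'' is not a polynomial-time operation ($\IS$ is NP-hard to compute, as the paper itself notes in observing that one cannot test $\mu\le 0$), and your recurrence $T(2k)\le 7\,T(2k-1)+n^{O(1)}$ has no implementable base case: the algorithm cannot detect that the budget has been exhausted by inspecting the measure. The paper instead hard-caps the search depth at $2k$ and returns no there, and the correctness of that cap is the technical heart of the proof: one shows that a yes-instance surviving to measure $0$ satisfies $\frac{1}{2}(\MM(G')+\IS(G'))=\IM(G')$, proves a Cameron--Walker-type structural characterization of such graphs, and verifies that on them the reduction rules alone drive $\ell$ to $0$ so that the yes-termination fires first. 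Without this argument, or a substitute for it, the branching analysis does not yield the claimed $\Ostar(49^k)$ algorithm.
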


In other words, we identify a novel below-guarantee parameterization for \textsc{Induced Matching} smaller than that of \textsc{IMBTG} that yields an FPT algorithm, thereby positively answering our question ($\star$).
To our surprise, it turns out that an answer to our question arises from using the average of two upper bounds as an upper bound.

To prove \Cref{thm:main}, we give a branching algorithm in which the \emph{measure} $\frac{1}{2}(\MM(G) + \IS(G)) - \ell$ decreases by $\frac{1}{2}$ in every branching step.
As we will see, branching in a naïve way (which leads to an FPT algorithm for \textsc{IMBTG}) does not always decrease the measure.
To work around this issue, we develop branching rules based on the Gallai--Edmonds decomposition.
To establish the correctness of our algorithm, we reveal a structural property of graphs $G$ with $\frac{1}{2}(\MM(G) + \IS(G)) = \IM(G)$, which may be of independent interest.

\section{Preliminaries}

\subsection{Notation}

We denote the set $\{ 1, \dots, t \}$ of integers by $[t]$.
All graphs are simple and undirected.
For a graph $G$, let $V(G)$ and $E(G)$ denote the set of vertices and edges, respectively.
Let $v \in V(G)$ be a vertex in $G$ and let $X \subseteq V(G)$ be a vertex set.
We use $N(v)$ to denote the neighborhood of $v$ (the set of vertices adjacent to $v$) and $N(X) = \bigcup_{v \in V} N(v) \setminus X$ to denote the neighborhood of $X$.
Let $\deg(v) = |N(v)|$ denote the degree of $v$.
Let $G[X]$ denote the subgraph induced by $X$.
We use $G - X$ to denote $G[V(G) \setminus X]$, i.e., the graph obtained from $G$ by deleting $X$.
We use the shorthand $G - v$ for $G - \{ v \}$.
A triplet $(u, v, w)$ of pairwise adjacent vertices is a \emph{triangle}.
A vertex $v$ with $\deg(v) = 0$ is \emph{isolated}.
A vertex $v$ with $\deg(v) = 1$ is a \emph{pendant vertex}.
A pair $(u, v)$ of adjacent vertices with $\deg(u) = \deg(v) = 1$ is an \emph{isolated edge}.
A triangle is a \emph{pendant triangle} if $\deg(u) = \deg(w) = 2$.

In our algorithm, we apply \emph{reduction rules} and \emph{branching rules}.
Herein, a reduction rule (branching rule) is a polynomial-time procedure that given an instance $I$, returns an instance $I'$ (a set of instances $I_1, \dots, I_c$, respectively).
We say that a reduction rule (branching rule) is \emph{correct} if $I$ is equivalent to $I'$, i.e., $I$ is a yes-instance if and only if $I'$ is a yes-instance ($I$ is a yes-instance if and only if there is some $c' \in [c]$ such that $I_{c'}$ is a yes-instance, respectively).

\subsection{Parameterized complexity}

In parameterized complexity, each instance of a problem is equipped with a parameter, usually denoted by the symbol $k$.
We say that a parameterized problem is fixed-parameter tractable (FPT) if there is an algorithm that solves it in $f(k) \cdot |I|^{O(1)}$ time, where $f$ is a computable function only depending on $k$ and $|I|$ is the input size.
We denote the running time as $O^{\star}(f(k))$, where the polynomial factor in $|I|$ is suppressed in the $O^{\star}$ notation.
For more in-depth notions in parameterized complexity, we refer to the standard textbook~\cite{DBLP:books/sp/CyganFKLMPPS15}.

\subsection{Matching theory}

We use several results from matching theory (see e.g., the book of Lov{\'a}sz and Plummer \cite{LP86}).
In particular, K\H{o}nig's theorem and the Gallai--Edmonds structural theorem play important roles in the running time analysis for \Cref{thm:main}.
Recall that a matching $M$ in a graph $G$ is a set of pairwise non-incident edges.
If a vertex $v \in V(G)$ is incident to an edge in $M$, then $M$ \emph{covers} $v$.
If there is no edge in $M$ incident with $v$, then $M$ \emph{misses} $v$.
A matching covering every vertex of $G$ is said to be \emph{perfect}.
A matching covering all but one vertex of $G$ is said to be \emph{near-perfect}.
A graph $G$ is \emph{factor-critical} if  for every vertex $v \in V(G)$, $G - v$ has a perfect matching.

\begin{theorem}[K\H{o}nig's theorem] \label{thm:konig}
  For a bipartite graph $G$, $\MM(G) = \VC(G)$.
\end{theorem}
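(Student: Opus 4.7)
The plan is to prove the two inequalities $\MM(G) \le \VC(G)$ and $\VC(G) \le \MM(G)$ separately. The first direction is the easy one and holds in any graph, not just bipartite graphs: any vertex cover must contain at least one endpoint of each edge in a matching, and since matching edges are pairwise non-incident, the chosen endpoints are distinct. I would dispose of this in a single sentence.

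For the nontrivial direction $\VC(G) \le \MM(G)$, the plan is to carry out the standard constructive proof via $M$-alternating paths, which crucially exploits bipartiteness. Let $G$ be bipartite with parts $A$ and $B$, and fix a maximum matching $M$. Let $U \subseteq A$ denote the set of vertices that $M$ misses in $A$, and let $Z$ denote the set of all vertices reachable from $U$ by an $M$-alternating path starting with a non-matching edge (vertices of $U$ itself are included via the empty path). The candidate vertex cover is then
\[
  S \;:=\; (A \setminus Z) \cup (B \cap Z),
\]
and the main claims are that $S$ is a vertex cover and that $|S| = |M|$.

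To verify that $S$ covers every edge, I would consider an arbitrary edge $\{a, b\}$ with $a \in A$, $b \in B$, and assume for contradiction that $a \in Z$ and $b \notin Z$. Split into the two cases $\{a,b\} \notin M$ and $\{a, b\} \in M$: in the first case, $a \in Z$ gives an alternating path that can be extended via the non-matching edge $\{a, b\}$, so $b \in Z$, contradiction; in the second case, the alternating path reaching $a \in A \cap Z$ must end with a matching edge, and the only matching edge incident to $a$ is $\{a, b\}$, so $b \in Z$, again a contradiction. For the cardinality equality, I would show four bookkeeping facts: every vertex in $A \setminus Z$ is $M$-covered (otherwise it would lie in $U \subseteq Z$); every vertex in $B \cap Z$ is $M$-covered (otherwise the alternating path to it would be $M$-augmenting, violating maximality of $M$); the matching partner of any vertex in $A \setminus Z$ lies in $B \setminus Z$ (else we could extend the path to $A \cap Z$); and the matching partner of any vertex in $B \cap Z$ lies in $A \cap Z$. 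Since the matching edges incident to $A \setminus Z$ and to $B \cap Z$ are therefore disjoint and all belong to $M$, we obtain $|S| \le |M|$, which combined with the trivial direction yields equality.

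The main obstacle is the case analysis showing that $S$ covers every edge—this is where both bipartiteness (so that each edge has one endpoint in $A$ and one in $B$) and the maximality of $M$ (so that reaching an unmatched vertex in $B$ is impossible) are essential, and where an imprecise definition of $Z$ can silently break the argument. Everything else is either a one-line observation or a routine application of the alternating-path definition.
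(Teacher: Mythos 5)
Your proof is correct. The paper does not prove this statement at all---it states K\H{o}nig's theorem as a classical fact from matching theory (citing the book of Lov\'asz and Plummer) and uses it as a black box in the analysis of Branching Rule~\ref{br:naivea}. Your argument is the standard alternating-path proof: the easy inequality $\MM(G) \le \VC(G)$ is handled correctly, the cover $S = (A \setminus Z) \cup (B \cap Z)$ is the right construction, the two-case check that $S$ covers every edge is sound (in particular, the observation that an alternating path reaching a matched vertex of $A$ must arrive via its matching edge), and the four bookkeeping facts---including the one ruling out a matching edge being counted from both sides---do establish $|S| \le |M|$. No gaps.
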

\begin{definition}
  \label{def:gallaiedmonds}
  The Gallai--Edmonds decomposition of a graph $G$ is a partition of $V(G)$ into $D(G)$, $A(G)$, and $C(G)$ where
  \begin{itemize}
    \item $D(G) = \{ v \in V(G) \mid \text{there exists a maximum matching missing $v$} \}$,
    \item $A(G) = N(D(G))$,
    \item $C(G) = V(G) \setminus (A(G) \cup D(G))$.
  \end{itemize}
\end{definition}

We simply write $D, A, C$ for $D(G), A(G), C(G)$, respectively, when $G$ is clear from context.

\begin{theorem}[The Gallai--Edmonds structure theorem (see e.g. \cite{LP86})]
  \label{thm:gallaiedmonds}
  The Gallai--Edmonds decomposition satisfies the following properties.
  \begin{itemize}
    \item
      Every connected component of $G[D]$ is factor-critical.
    \item
      $G[C]$ has a perfect matching.
    \item
      Let $G'$ be the bipartite graph obtained from $G[D \cup A]$ by contracting every connected component of $G[D]$ into one vertex and removing edges in $G[A]$.
      Then, $|N_{G'}(A')| > |A'|$ for every $A' \subseteq A$.
    \item
      A matching $M$ is a maximum matching if and only if the following hold:
      \begin{itemize}
        \item
          $M$ contains a near-perfect matching for every connected component of $G[D]$.
        \item
          Every vertex in $A$ is matched to a vertex in $D$.
        \item
          $M$ contains a perfect matching of $G[C]$.
      \end{itemize}
    \item
      The Gallai--Edmonds decomposition can be computed in polynomial time.
  \end{itemize}
\end{theorem}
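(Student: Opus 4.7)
The plan is to prove each clause in turn, with Berge's theorem (a matching $M$ in $G$ is maximum iff $G$ has no $M$-augmenting path) as the workhorse, together with symmetric-difference arguments on pairs of maximum matchings. Throughout, I write $M$ for a fixed maximum matching of $G$ and rely on the basic fact that if $M$ and $M'$ are both matchings of the same cardinality, then $M \triangle M'$ consists of vertex-disjoint alternating cycles and alternating paths whose two endpoints are matched by exactly one of $M, M'$.

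\textbf{Factor-criticality of components of $G[D]$.}
I would first show the following ``Stability Lemma'': for any $v \in D$ and any maximum matching $M$, there exists a maximum matching that misses $v$ and whose symmetric difference with $M$ is contained in the component of $v$ in a carefully chosen alternating forest. From this I can show $D$ is ``closed under alternation'': if $u \in D$ and $uw \in M$, then $w \in A$, and further every vertex reachable from an exposed vertex via an $M$-alternating path of even length lies in $D$. Given a component $K$ of $G[D]$ and any $u \in K$, I would take some $v \in K$ missed by a maximum matching $M_v$, find an $M_v$-alternating $v$–$u$ path inside $K$ (possible since $K$ is connected and the closure-under-alternation property restricts where alternating paths can go), and swap along it to obtain a maximum matching $M_u$ missing $u$. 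Restricting $M_u$ to $K$ then produces a near-perfect matching of $K$, yielding factor-criticality.

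\textbf{Perfect matching of $G[C]$ and the expansion condition.}
For $C$, I would argue that every maximum matching $M$ matches each vertex of $C$ to another vertex of $C$: no $M$-edge can go from $C$ to $D$ by definition of $A = N(D)$, and no $M$-edge can go from $C$ to $A$ because, using the Stability Lemma, any $a \in A$ can be re-matched to some neighbor in $D$ without decreasing $|M|$, freeing its $C$-partner and contradicting that $C$ contains no exposed vertex. Restricting $M$ to $C$ is then a perfect matching of $G[C]$. For the expansion inequality in the contracted bipartite graph $G'$, I would assume for contradiction some $A' \subseteq A$ has $|N_{G'}(A')| \le |A'|$; then using the factor-criticality just established to ``push'' missed vertices around the components in $N_{G'}(A')$ and an averaging argument on how $M$ matches $A'$ into $D$, I can exhibit a vertex of $A'$ that is missed by some maximum matching, contradicting $A \cap D = \emptyset$.

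\textbf{Characterization of maximum matchings and computability.}
Once the three structural facts are in hand, the characterization follows by a deficiency count: the Gallai--Edmonds formula $|V(G)| - 2\MM(G) = c(G[D]) - |A|$ (where $c(\cdot)$ is the number of connected components) can be derived from the previous clauses, and any matching $M$ meeting the three bullet-point conditions attains this count, while any matching violating one of them is strictly smaller. Finally, polynomial-time computability of the decomposition reduces to computing one maximum matching via Edmonds' blossom algorithm and then identifying $D$ as the set of vertices reachable from exposed vertices via $M$-alternating walks of even length; $A$ and $C$ are then read off by definition.

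\textbf{Main obstacle.}
I expect the hardest step to be the Stability Lemma underlying factor-criticality: turning the ``there exists a maximum matching missing $v$'' quantifier in the definition of $D$ into the much stronger conclusion that, within each component of $G[D]$, \emph{every} vertex can simultaneously be the missed one. The subtlety is that symmetric differences of two maximum matchings can have components sprawling across $D$, $A$, and $C$, and one must show these components are confined to $D \cup A$ and, more delicately, behave well within a single connected component of $G[D]$; this is where the interaction between alternating forests and blossoms is really needed.
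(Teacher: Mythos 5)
First, a point of comparison: the paper does not prove this theorem at all --- it is the classical Gallai--Edmonds structure theorem, quoted with a pointer to Lov\'asz and Plummer \cite{LP86} --- so your proposal has to be judged as a self-contained proof of a known result, not against any in-paper argument.

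Judged that way, it has a genuine gap. The claim on which your factor-criticality step rests, namely ``if $u \in D$ and $uw \in M$, then $w \in A$'', is false. Take $G$ to be a single triangle: every vertex is missed by some maximum matching, so $D = V(G)$ and $A = C = \emptyset$, yet the one edge of any maximum matching joins two vertices of $D$. More generally, inside a factor-critical component $K$ of $G[D]$ on $2k+1$ vertices, a maximum matching has $k$ edges with both ends in $K$ and at most one edge leaving $K$; matched edges inside $D$ are exactly the blossom phenomenon, and they are what the first bullet of the theorem is about. Since your plan uses this ``closure under alternation'' to confine alternating paths to a single component of $G[D]$ and then swap along such a path to expose an arbitrary vertex, the central step collapses; the difficulty you flag at the end is not a residual subtlety but the precise point at which your invariant is wrong. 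The standard repair is structured differently: one proves Gallai's lemma (a connected graph in which every vertex is missed by some maximum matching is factor-critical) and the Stability Lemma in its correct form --- deleting a vertex of $A$, not of $D$, leaves $D$ and $C$ unchanged and shrinks $A$ --- and then strips $A$ off vertex by vertex, as in \cite{LP86}; alternatively one runs the correctness analysis of Edmonds' blossom-shrinking forest. Two further soft spots of the same flavor: (i) your argument that no $M$-edge joins $C$ to $A$ needs a maximum matching that simultaneously re-matches a vertex $a \in A$ into $D$ and exposes its old partner $c \in C$, and producing it requires the same blossom-aware alternating-path machinery, not just your Stability Lemma for vertices of $D$; (ii) your proof of the expansion condition $|N_{G'}(A')| > |A'|$ uses that a maximum matching matches the vertices of $A'$ into distinct components of $G[D]$, but that fact is part of the fourth bullet, which your outline derives only afterwards from the deficiency formula --- as organized, the argument is circular and would need to be reordered, e.g., by establishing the matching property of $A$ independently first.
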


\subsection{Cameron--Walker graphs}

A graph $G$ whose maximum matching size equals maximum induced matching size (that is, $\MM(G) = \IM(G)$) is referred to as a \emph{Cameron--Walker graph} in the literature, as Cameron and Walker~\cite{DBLP:journals/dm/CameronW05a} gave the structural characterization of these graphs.
Recall that a triangle $(u, v, w)$ with $\deg(u) = \deg(w) = 2$ is called a pendant triangle.
A triangle star is a graph obtained from a triangle by adding any number of pendant triangles to one of its vertices (see \Cref{fig:cw}). (When we say that we add a pendant triangle to a vertex $v$, it means that we add two vertices $u$ and $w$ and add edges such that $u, v, w$ form a triangle.)

\begin{theorem}[\cite{DBLP:journals/dm/CameronW05a}]
  \label{thm:cameronwalker}
  For a connected graph $G$, $\MM(G) = \IM(G)$ if and only if one of the following holds:
  \begin{itemize}
    \item
      $G$ is a star.
    \item
      $G$ is a triangle star.
    \item
      $G$ is obtained from a connected bipartite graph $G'$ with a bipartition $V(G) = U \cup W$ by adding at least one pendant vertex to each vertex of $U$ and adding any number of pendant triangles to each vertex of $W$.
  \end{itemize}
\end{theorem}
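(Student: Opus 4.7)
The plan is to prove the two directions separately. The ``if'' direction is handled by direct construction: for each of the three families one exhibits an induced matching of size $\MM(G)$. Stars trivially have $\MM = \IM = 1$. For a triangle star with base triangle $(a,b,c)$ and pendant triangles $(c, x_i, y_i)$ at $c$, the set $\{(a,b)\} \cup \{(x_i, y_i)\}_i$ is both a maximum matching (every triangle in the graph contributes at most one matched edge, and the base triangle's matched edge blocks $c$) and an induced matching (distinct pendant triangles share only the unmatched vertex $c$, and $(a,b)$ is disjoint from the legs). For the third family, choose one pendant leaf per $u \in U$ and the leg-edge of each pendant triangle at each $w \in W$; all matched vertices have degree one or two with no cross-adjacencies, giving an induced matching whose size equals $\MM(G)$ by a short counting argument against $|U|$ plus the number of pendant triangles.

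For the converse, suppose $G$ is connected with $\MM(G) = \IM(G) = m$, and fix an induced matching $M = \{(a_i, b_i) : i \in [m]\}$ of maximum size. Two preliminary facts are immediate: $V(G) \setminus V(M)$ is an independent set (else $M$ is not maximum), and every vertex outside $V(M)$ has all its neighbors inside $V(M)$. The core of the proof is an \emph{exchange argument}. Suppose an unmatched vertex $v$ is adjacent to endpoints of two distinct matched edges, say to $a_i$ and $a_j$ with $i \neq j$. Remove $(a_i, b_i)$ from $M$, add $(v, a_i)$, and observe that $b_i$ becomes free; using maximality and the induced property one derives that the resulting maximum matching inevitably contains an edge between endpoints of two different matched edges, contradicting $\MM(G) = \IM(G)$. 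Variants of this argument (swapping within a triangle, or routing through a pendant) rule out further configurations.

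The exchange restrictions force each matched edge $(a_i, b_i)$ into one of a small number of local types: it is the sole edge of a star, or its endpoints together with a single unmatched vertex form a pendant triangle glued to a hub vertex, or the two endpoints play asymmetric roles—one receiving only pendant leaves, the other participating in the remaining adjacencies. Connectivity of $G$ is then used to glue these local pictures into a global one: if every pendant-triangle edge shares the same hub, one gets exactly a triangle star; otherwise the hubs form the $W$-side and the pendant-leaf-receiving vertices form the $U$-side of the third family, with the non-pendant adjacencies between them forming the required bipartite graph $G'$.

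The main obstacle is the exchange argument and its ramifications: one must not only forbid the basic troublesome configuration above but also extract from its absence a clean local classification of each matched edge, and then verify that the resulting attachment patterns assemble into exactly the three listed graphs. In particular, excluding pendant triangles on the $U$-side, ruling out additional edges between private leaves of different $U$-vertices, and ensuring the bipartite graph $G'$ is well-defined all require repeated application of the exchange idea. The boundary cases $|U| = 0$ or $|W| = 0$ correspond precisely to the triangle-star and star cases, and handling them requires care to avoid double-counting while preserving connectivity.
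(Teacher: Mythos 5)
This statement is not proved in the paper at all: it is imported from Cameron and Walker (with the corrected formulation of Hibi et al.), so there is no in-paper proof to compare against. Judged on its own, your ``if'' direction is essentially fine --- the explicit induced matchings you exhibit for stars, triangle stars, and the bipartite construction do attain $\MM(G)$, and the counting bound $\MM(G)\le |U| + (\text{number of pendant triangles})$ goes through.

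The ``only if'' direction, however, has a genuine gap, and it sits exactly at the one place where you give a concrete argument. You claim that if an unmatched vertex $v$ is adjacent to endpoints $a_i$ and $a_j$ of two \emph{distinct} edges of the maximum induced matching $M$, then an exchange yields a contradiction with $\MM(G)=\IM(G)$. This configuration is not forbidden: in the third family, a vertex $w\in W$ with no pendant triangles is uncovered by the canonical induced matching $\{(u,\text{leaf}_u)\}_{u\in U}\cup\{(x,y)\}$ yet is adjacent to two or more matched $U$-endpoints whenever $\deg_{G'}(w)\ge 2$. Your exchange (replace $(a_i,b_i)$ by $(v,a_i)$) merely produces \emph{another} maximum matching that fails to be induced; that is no contradiction, because $\MM(G)=\IM(G)$ asserts only that \emph{some} maximum matching is induced, not that all of them are. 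Since the remainder of your converse (the ``local types'' of matched edges and the gluing via connectivity) is explicitly built on ``the exchange restrictions,'' the classification does not follow from what you have written: the hard direction of the theorem is left essentially unproved. If you want a self-contained argument you would need to replace the exchange step with a correct structural analysis (e.g., showing that any vertex covered by $M$ and having a neighbor outside its own $M$-edge forces a pendant leaf or pendant triangle on its partner), or simply cite Cameron--Walker/Hibi et al.\ as the paper does.
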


See \Cref{fig:cw} for an illustration of a Cameron--Walker graph.
The statement provided by Cameron and Walker~\cite{DBLP:journals/dm/CameronW05a} had a small mistake.
The statement of \Cref{thm:cameronwalker} follows a slight modification of Hibi et al.~\cite{HibiHKK15}.
Note that \Cref{thm:cameronwalker} yields a linear-time algorithm to recognize Cameron--Walker graphs.

\begin{figure}[t]
  \begin{center}
    \vspace{-3ex}
    \begin{tikzpicture}[scale=0.7]
      \node[vertex] (s) at (0, 0) {};
      \node[vertex] (a1) at (20:2) {};
      \node[vertex] (a2) at (70:2) {};
      \node[vertex] (b1) at (110:2) {};
      \node[vertex] (b2) at (160:2) {};
      \node[vertex] (c1) at (200:2) {};
      \node[vertex] (c2) at (250:2) {};
      \node[vertex] (d1) at (290:2) {};
      \node[vertex] (d2) at (340:2) {};

      \draw (s) -- (a1) -- (a2) -- (s) -- (b1) -- (b2) -- (s) -- (c1) -- (c2) -- (s) -- (d1) -- (d2) -- (s);

      \begin{scope}[shift={(7, -1)}]
      \draw[thick, dotted, rounded corners] (-1.3, -.5) rectangle (-.7, 2.5) {};
      \node[vertex] (a1) at (-1, 0) {};
      \node[vertex] (a2) at (-1, 1) {};
      \node[vertex] (a3) at (-1, 2) {};
      \begin{scope}[shift={(-1, 0)}]
        \node[vertex] (a11) at (200:1.3) {};
        \node[vertex] (a12) at (160:1.3) {};
      \end{scope}
      \begin{scope}[shift={(-1, 2)}]
        \node[vertex] (a31) at (110:1.3) {};
        \node[vertex] (a32) at (150:1.3) {};
        \node[vertex] (a33) at (190:1.3) {};
        \node[vertex] (a34) at (230:1.3) {};
      \end{scope}
      \draw (a1) -- (a11) -- (a12) -- (a1);
      \draw (a3) -- (a31) -- (a32) -- (a3) -- (a33) -- (a34) -- (a3);

      \draw[thick, dotted, rounded corners] (.7, -.8) rectangle (1.3, 2.8) {};
      \begin{scope}[shift={(1, -.2)}]
        \node[vertex] (b1) at (0, 0) {};
        \node[vertex] (b11) at (0:1.3) {};
        \node[vertex] (b12) at (330:1.3) {};
      \end{scope}
      \node[vertex] (b2) at (1, .6) {};
      \node[vertex] (b21) at (2.3, .6) {};
      \node[vertex] (b3) at (1, 1.4) {};
      \node[vertex] (b31) at (2.3, 1.4) {};
      \begin{scope}[shift={(1, 2.2)}]
        \node[vertex] (b4) at (0, 0) {};
        \node[vertex] (b41) at (10:1.3) {};
        \node[vertex] (b42) at (350:1.3) {};
        \node[vertex] (b43) at (30:1.3) {};
      \end{scope}
      \draw (b11) -- (b1) -- (b12);
      \draw (b2) -- (b21);
      \draw (b3) -- (b31);
      \draw (b4) -- (b41); \draw (b4) -- (b42); \draw (b4) -- (b43);

      \draw (a1) -- (b1); \draw (a1) -- (b2); \draw (a1) -- (b4);
      \draw (a2) -- (b1); \draw (a2) -- (b3);
      \draw (a3) -- (b3); \draw (a3) -- (b4);
      \end{scope}
    \end{tikzpicture}
  \end{center}
  \caption{A triangle star with four pendant triangles (left). A Cameron--Walker graph (right). The vertex sets $U$ and $W$ are marked by dotted lines.}
  \label{fig:cw}
\end{figure}
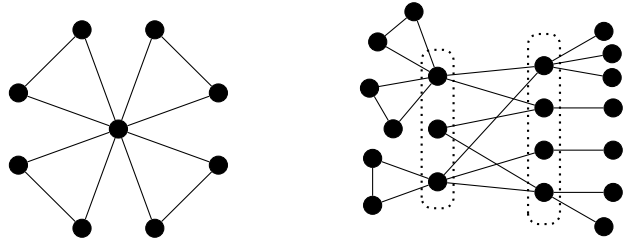

\subsection{FPT algorithm for \textsc{IMBTG}}
\label{ss:prem:algo}
In this subsection, we briefly discuss a simple branching algorithm that solves \textsc{IMBTG} in $\Ostar(9^k)$ time \cite{DBLP:journals/jda/MoserT09}.
Recall that the input of \textsc{IMBTG} is $(G, \ell)$ for a graph $G$ and $\ell \in \mathbb{N}$, and we search for an induced matching of size $\ell$ in $G$.
The parameterization of \textsc{IMBTG} is $k = \frac{1}{2} n - \ell$.
As we will see in \Cref{sec:algo}, the reduction rules and branching rules presented here form the basis for our algorithm for \textsc{IMBA}, which is parameterized by $\frac{1}{2}(\MM(G) + \IS(G)) - \ell$.

We begin with two reduction rules:
\begin{rrule}
  \label{rr:iso-vertex}
  Remove an isolated vertex.
\end{rrule}
\begin{rrule}
  \label{rr:iso-edge} \hspace{-.1em}
  Remove an isolated edge (including its endpoints) and decrease $\ell$ by one (if $\ell > 0$).
\end{rrule}
It is straightforward to prove the correctness of these reduction rules.
\begin{lemma}
  \label{lemma:rr:corr}
  \Cref{rr:iso-vertex,rr:iso-edge} are correct.
\end{lemma}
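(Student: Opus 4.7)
The plan is to handle the two rules separately, establishing in each case that the output instance is equivalent to the input, with the simple common principle being that an isolated vertex or an isolated edge is ``locally inert'' in any induced matching and can be peeled off without altering the achievable solution size (up to the obvious accounting).

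For \Cref{rr:iso-vertex}, I would observe that an isolated vertex $v$ contributes no edge and is not an endpoint of any edge of $G$. Hence the edge sets $E(G)$ and $E(G-v)$ coincide, and so do the sets of induced matchings of the two graphs. In particular $\IM(G)=\IM(G-v)$, so $(G,\ell)$ is a yes-instance exactly when $(G-v,\ell)$ is.

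For \Cref{rr:iso-edge}, let $(u,v)$ be an isolated edge and set $G' := G - \{u,v\}$. The crucial property is that $uv$ is the only edge of $G$ incident with either $u$ or $v$, so for any edge set $M'\subseteq E(G')$ the set $M'\cup\{uv\}$ is a matching, and moreover no edge of $G$ connects an endpoint of $uv$ to an endpoint of an edge in $M'$. I would then argue the two directions as follows. Assume $(G,\ell)$ has an induced matching $M$ of size $\ell$. Then $M\setminus\{uv\}$ is an induced matching in $G'$ of size at least $\ell-1$, so $(G',\ell-1)$ is a yes-instance. Conversely, given an induced matching $M'$ of size $\ell-1$ in $G'$, the set $M'\cup\{uv\}$ is an induced matching of size $\ell$ in $G$ by the observation above. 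The guard ``if $\ell > 0$'' ensures that $\ell - 1 \ge 0$ so the reduced instance is well-defined; when $\ell = 0$ the original instance is trivially a yes-instance and no rule needs to be applied.

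There is no genuine obstacle: both arguments amount to observing that isolated vertices and isolated edges contribute nothing and exactly one to any maximum induced matching, respectively. The only care needed is to ensure the bookkeeping of $\ell$ in \Cref{rr:iso-edge} is consistent with the ``if $\ell > 0$'' side condition, which I address above.
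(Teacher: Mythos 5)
Your proof is correct and follows exactly the straightforward argument the paper has in mind (the paper defers this proof to its appendix, noting only that it is straightforward): isolated vertices leave the set of induced matchings unchanged, and an isolated edge can always be added to or removed from an induced matching with the size changing by exactly one. The handling of the $\ell > 0$ guard is also fine.
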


Since there is no vertex of degree at least two in any induced matching, we may branch into three instances as follows whenever there is a vertex of degree at least two:
\begin{brule}
  \label{brule:naive}
  Choose a vertex $v \in V(G)$ with at least two neighbors $u, w \in V(G)$.
  We branch into three instances: $(G - u, \ell)$, $(G - v, \ell)$, $(G - w, \ell)$.
\end{brule}
The algorithm terminates in one of the following two ways:
\begin{term}
  \label{tc:yes}
  Return yes if $\ell = 0$.
\end{term}
\begin{term}
  \label{tc:no}
  Return no if $\frac{1}{2}|V(G)| < \ell$.
\end{term}

In the simple $\Ostar(9^k)$-time algorithm for \textsc{IMBTG}, we repeat the following:
\begin{enumerate}
  \item Apply \Cref{rr:iso-vertex,rr:iso-edge} exhaustively.
  \item Check the termination conditions for \Cref{tc:yes,tc:no}.
  \item If there is a vertex of degree at least two, we branch according to \Cref{brule:naive}.
\end{enumerate}
If there is no vertex of degree at least two, then applying \Cref{rr:iso-vertex,rr:iso-edge} exhaustively deletes all vertices in the graph.
Thus, we always reach the condition for \Cref{tc:yes} ($\ell = 0$) or \Cref{tc:no} ($\ell > 0$).
For the running time analysis, note that our algorithm terminates if the value of $\frac{1}{2} |V(G)| - \ell$ is negative (\Cref{tc:no}).
This value decreases by $\frac{1}{2}$ each time we apply \Cref{brule:naive}:
$|V(G)|$ decreases by one and $\ell$ remains the same.
Moreover, this value does not increase by applying \Cref{rr:iso-vertex,rr:iso-edge}.
Since the value of $\frac{1}{2} |V(G)| - \ell$ equals $k$ for the input instance, we apply \Cref{brule:naive} at most $2k + 1$ times. (A more careful analysis yields an upper bound of $2k$.)
Hence, this algorithm runs in time $\Ostar(3^{2k + 1}) = \Ostar(9^k)$.

We remark that \Cref{brule:naive} can be strengthened as follows:
\begin{brule}
  \label{brule:naiveb}
  Choose two adjacent vertices $u, v \in V(G)$ with $N(\{ u, v \}) \ne \emptyset$.
  We branch into three instances: $(G - u, \ell)$, $(G - v, \ell)$, $(G - N(\{ u, v \}), \ell)$
\end{brule}

To see the correctness, observe that we have two cases:
If $G$ has an induced matching $M$ of size $\ell$ containing $uv$, then $M$ is an induced matching in $G - N(\{ u, v \})$ as well.
Otherwise, $G$ has no induced matching of size $\ell$ covering both $u$ and $v$, and thus $u$ or $v$ must be deleted.

Due to a space limitation, some proofs are in the appendix.

\section{Algorithm for IMBA}
\label{sec:algo}

In this section, we give an FPT algorithm \textsc{IMBA}:
Given a graph $G$ and an integer $\ell$, \textsc{IMBA} asks whether $G$ has an induced matching of size $\ell$ with the parameterization $\frac{1}{2}(\MM(G) + \IM(G)) - \ell$.
We start with an overview in \Cref{ss:algo:overview}.
We describe reduction rules in \Cref{ss:algo:rr} and branching rules in \Cref{ss:algo:br}.

\subsection{Overview}
\label{ss:algo:overview}
Our algorithm for \textsc{IMBA} is an intricate adaptation of the FPT algorithm given in \Cref{ss:prem:algo}.
In addition to \Cref{rr:iso-vertex,rr:iso-edge}, we use another reduction rule on pendant triangles (\Cref{rr:ts}).
We will also derive our branching rules from \Cref{brule:naive,brule:naiveb}.

To ensure that our algorithm runs in $\Ostar(49^k)$ time, we define the \emph{measure} of an instance $\mathcal{I} = (G, \ell)$ of \textsc{IMBA} as $\mu(\mathcal{I}) = \frac{1}{2}(\MM(G) + \IS(G)) - \ell$.
Note that the parameter $k$ for \textsc{IMBA} is the measure of the input instance.\footnote{As is often the case in parameterized complexity, we assume that $k$ is given along with input. We will not change the value of $k$ in this section.}
We design our algorithm such that (i) every branching rule generates at most seven instances whose measure is smaller by at least $\frac{1}{2}$ and (ii) there is no increase in the measure throughout.
This way, since we start with the measure at $k$, we end up with an instance whose measure is zero (or smaller) within $2k$ branching steps.\footnote{Our algorithm does not compute the measure. In fact, it is even computationally challenging to determine whether $\mu(\mathcal{I}) \le 0$ (this is equivalent to $\IS(G) \le 2\ell - \MM(G)$).}
As we show in \Cref{ss:algo:corr}, our algorithm correctly identifies yes-instances before the measure becomes zero or smaller.
So we terminate returning no after $2k$ branching steps:
\begin{term}
  \label{tc:2k}
  Return no if branching rules have been applied $2k$ times.
\end{term}
\Cref{tc:2k} ensures that the search tree has depth at most $2k$.
Our algorithm thus runs in $\Ostar(7^{2k}) = \Ostar(49^k)$ time.

Our algorithm repeats the following until one of the conditions for a termination is met.
\begin{enumerate}
  \item Apply \Cref{rr:iso-vertex,rr:iso-edge,rr:ts} exhaustively.
  \item Check the termination conditions for \Cref{tc:yes,tc:no,tc:2k}.
  \item If there is a vertex of degree at least two, we apply one of the branching rules in \Cref{ss:algo:br}.
\end{enumerate}
We remark that we check the condition of \Cref{tc:yes} before that of \Cref{tc:2k}.

As for how to branch, we want to branch in such a way that the measure drops by $\frac{1}{2}$.
To this end, branching according to \Cref{brule:naive} in the naïve way is seemingly not appropriate because the measure may not decrease.
This challenge is illustrated in \Cref{fig:branching}:
The maximum matching size and the maximum independent set both remain unchanged (thus so does the measure) after deleting the red vertex.
To find a set of vertices whose deletion guarantees a decrease in the measure by $\frac{1}{2}$, we will exploit the Gallai--Edmonds decomposition.

\begin{figure}[t]
  \begin{center}
    \begin{tikzpicture}[scale=1]
      \node[vertex, red] (s) at (0, 0) {};
      \node[vertex] (rb1) at (1, 0) {};
      \node[vertex] (rb2) at (2, 0) {};
      \node[vertex] (ru1) at (0.5, 1) {};
      \node[vertex] (ru2) at (1.5, 1) {};
      \node[vertex] (lu) at (-2, 1) {};
      \node[vertex] (lb) at (-2, 0) {};
      \node[vertex] (lm) at (-1.2, 0.5) {};
      \node[vertex] (su) at (-0.5, 1) {};

      \draw (ru1) -- (ru2);
      \draw (ru1) -- (s) -- (ru2);
      \draw (ru1) -- (rb1) -- (ru2);
      \draw (ru1) -- (rb2) -- (ru2);
      \draw (s) -- (su) -- (lm) -- (s);
      \draw (lu) -- (lm) -- (lb) -- (lu);
    \end{tikzpicture}
  \end{center}
  \caption{The maximum matching size and independent set size remain 4 even after deleting the red vertex.}
  \label{fig:branching}
\end{figure}
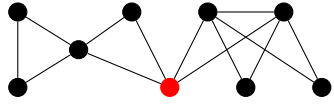

\subsection{Reduction Rules}
\label{ss:algo:rr}

Our algorithm employs \Cref{rr:iso-vertex,rr:iso-edge}.
We also introduce another reduction rule on pendant triangles.
Recall that a triplet $(u, v, w)$ is a pendant triangle if $u, v, w$ are pairwise adjacent and $\deg(u) = \deg(w) = 2$.

\begin{rrule}
  \label{rr:ts}
  If there is a pendant triangle $(u, v, w)$ with $\deg(u) = \deg(w) = 2$, then delete~$v$.
\end{rrule}
\begin{restatable}{lemma}{rrts}
  \label{lemma:rr:ts}
  \Cref{rr:ts} is correct.
\end{restatable}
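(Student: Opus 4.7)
The plan is to show that $(G, \ell)$ and $(G - v, \ell)$ are equivalent instances, which establishes the correctness of \Cref{rr:ts}. I will handle the two directions separately, using the induced matching edge $uw$ in $G - v$ as a universal substitute for any matching edge of $G$ that becomes lost when $v$ is deleted.

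For the easy direction, suppose $M'$ is an induced matching of size $\ell$ in $G - v$. I claim $M'$ is also an induced matching in $G$. The matching property is preserved trivially. For the induced property, suppose some edge $e \in E(G) \setminus E(G-v)$ connects two edges of $M'$; then $e$ must be incident to $v$, so one of the two edges of $M'$ would have to contain $v$, which is impossible since $v \notin V(G - v)$.

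For the harder direction, let $M$ be an induced matching of size $\ell$ in $G$. I will construct from $M$ an induced matching $M^{\star}$ of size $\ell$ in $G - v$ by splitting into cases according to how $v$ is covered by $M$. If $v$ is not covered by $M$, set $M^{\star} = M$; this uses no edge incident to $v$, and the induced property follows because $E(G-v) \subseteq E(G)$. If $M$ contains $uv$ or $vw$, set $M^{\star} = (M \setminus \{uv, vw\}) \cup \{uw\}$. If $M$ contains some edge $vx$ with $x \notin \{u, w\}$, set $M^{\star} = (M \setminus \{vx\}) \cup \{uw\}$. In both replacement cases, I need to verify that $u$ and $w$ are not covered by any other edge of $M$: the only edges of $G$ incident to $u$ are $uv$ and $uw$, and $\{uv,uw\}$ together with any other matching edge containing $v$ or adjacent to $v$ would violate either the matching property or the induced property (for instance, if $uw \in M$ and also $vx \in M$, the edge $uv \in E(G)$ would connect two edges of $M$); symmetrically for $w$. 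Hence $u$ and $w$ are free to be covered by the new edge $uw$, and $M^{\star}$ is a matching in $G - v$ of size $\ell$.

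It remains to check that $M^{\star}$ is induced in $G - v$. The only edges incident to $u$ or $w$ in $G - v$ are on the single edge $uw$ itself (since $\deg_G(u) = \deg_G(w) = 2$ with both neighbors among $\{u, v, w\}$), so no edge of $G - v$ can connect $uw$ to another edge of $M^{\star}$. Any other pair of edges in $M^{\star}$ was already a pair in $M$ and hence not connected by an edge of $G$, and in particular not by an edge of $G - v$.

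The main obstacle, and essentially the only subtle part, is the case where $M$ contains an edge $vx$ with $x \notin \{u,w\}$: one must argue carefully that neither $u$ nor $w$ belongs to any edge of $M$, by using both the matching and induced conditions together with the fact that $uv, vw, uw \in E(G)$. Once this is established, the substitution by $uw$ goes through uniformly and the proof concludes.
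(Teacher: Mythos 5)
Your proof is correct and follows essentially the same route as the paper's: in both, an induced matching of $G-v$ lifts trivially to $G$, and an induced matching of $G$ covering $v$ is repaired by swapping the edge at $v$ for $uw$, using $\deg(u)=\deg(w)=2$ to see that $uw$ is free and isolated after the deletion. You merely spell out the case analysis and the verification that $u,w$ are uncovered, which the paper leaves implicit.
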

\begin{proof}
  Let $G' = G - v$ be the graph obtained by deleting $v$.
  We show that $\IM(G) \ge \ell$ if and only if $\IM(G') \ge \ell$.
  First, observe that $\IM(G') \ge \ell$ implies $\IM(G) \ge \ell$, since any induced matching in $G'$ is also an induced matching in $G$.
  We then show that if $G$ has an induced matching $M$ of size $\ell$, then $\IM(G') \ge \ell$.
  If $M$ does not cover $v$, then $M$ is an induced matching in $G'$ as well, implying that $\IM(G') \ge \ell$.
  Suppose that $M$ covers $v$ and let $e$ denote the edge in $M$ incident to $v$.
  Then, the edge $uw$ is not part of $M$ since $u$ and $w$ are adjacent to $v$.
  Thus, $(M \setminus \{ e \}) \cup \{ uw \}$ is an induced matching of size $\ell$ in $G$ as well as $G'$.
\end{proof}

As discussed in \Cref{ss:algo:overview}, we need to ensure that these reduction rules do not increase the measure.
Recall that for an instance $\mathcal{I} = (G, \ell)$, its measure is $\mu(\mathcal{I}) = \frac{1}{2}(\MM(G) + \IS(G)) - \ell$.

\begin{restatable}{lemma}{rrmeasure}
  \label{lemma:rr:measure}
  Let $\mathcal{I} = (G, \ell)$ be an instance of \textsc{IMBA} and let $\mathcal{I}' = (G', \ell')$ be an instance obtained by applying Reduction Rule \ref{rr:iso-vertex}, \ref{rr:iso-edge}, or \ref{rr:ts}.
  Then, $\mu(\mathcal{I}') \le \mu(\mathcal{I})$.
\end{restatable}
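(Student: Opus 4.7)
The plan is to check each of the three rules separately, because they modify the graph and $\ell$ in different ways. For each rule I would compute the three deltas $\Delta\MM = \MM(G') - \MM(G)$, $\Delta\IS = \IS(G') - \IS(G)$, and $\Delta\ell = \ell' - \ell$, and then substitute into
\[
  \mu(\mathcal{I}') - \mu(\mathcal{I}) \;=\; \tfrac{1}{2}\,\Delta\MM + \tfrac{1}{2}\,\Delta\IS - \Delta\ell,
\]
to verify that this quantity is non-positive.

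For \Cref{rr:iso-vertex}, removing an isolated vertex $v$ does not change $\MM$ and does not change $\ell$; for $\IS$, the standard observation is that every maximum independent set of $G$ must contain $v$ (otherwise we could add $v$ and contradict maximality), so $\Delta\IS = -1$. Thus $\mu$ drops by exactly $\tfrac12$. For \Cref{rr:iso-edge}, removing an isolated edge $uv$ deletes exactly one edge of every maximum matching, giving $\Delta\MM = -1$, while any maximum independent set contains exactly one of $u, v$ (both are independent of the rest of the graph, so at least one must be chosen, and they are adjacent so at most one can be), giving $\Delta\IS = -1$. Combined with $\Delta\ell \in \{0, -1\}$ (depending on whether $\ell > 0$), we obtain $\mu(\mathcal{I}') - \mu(\mathcal{I}) \in \{-1, 0\}$.

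The substantive case is \Cref{rr:ts}, and here the key step is to show that $\IS(G) = \IS(G - v)$. The inequality $\IS(G-v) \le \IS(G)$ is immediate because $G-v$ is a subgraph. For the reverse, I would start from any maximum independent set $S$ of $G$ and prove that we may assume $u \in S$ and $v, w \notin S$: if $v \in S$, then $u, w \notin S$ by adjacency in the triangle, and replacing $v$ by $u$ yields an independent set of the same size (since the only neighbors of $u$ are $v$ and $w$, and both are absent from the modified set); if $w \in S$, we similarly swap $w$ for $u$; if $u \in S$, we are done; and the case where none of $u, v, w$ lies in $S$ is impossible because we could add $u$ and contradict maximality. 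Such an $S$ lies entirely in $V(G-v)$ and remains independent there, so $\IS(G-v) \ge |S| = \IS(G)$. Then $\Delta\IS = 0$, while $\Delta\MM \le 0$ trivially since $G-v$ is a subgraph, and $\Delta\ell = 0$, so $\mu(\mathcal{I}') \le \mu(\mathcal{I})$.

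The only step that requires genuine work is the swap argument for \Cref{rr:ts}; the accounting for the other two rules is routine and relies only on the elementary fact that an isolated vertex contributes $+1$ to $\IS$ and an isolated edge contributes $+1$ to both $\MM$ and $\IS$.
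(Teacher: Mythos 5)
Your proof is correct and follows essentially the same accounting as the paper's: for each rule, track how $\MM$, $\IS$, and $\ell$ change and substitute into the measure. The only difference is that for Reduction Rule~\ref{rr:ts} you prove the stronger fact $\IS(G-v)=\IS(G)$ via a swap argument, whereas the paper only needs (and uses) the trivial monotonicity $\IS(G-v)\le\IS(G)$ and $\MM(G-v)\le\MM(G)$ together with $\ell$ being unchanged, so that extra step, while valid, is not required.
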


\begin{proof}
  It is easy to see that when Reduction Rule \ref{rr:iso-vertex} or \ref{rr:ts} is applied, the measure does not increase as $\ell$ remains unchanged.
  There is no increase in the measure when applying \Cref{rr:iso-edge} either:
  We have $\MM(G') = \MM(G) - 1$ and $\IS(G') = \IS(G) - 1$ since every maximal matching contains an isolated edge and every maximal independent set contains exactly one endpoint of an isolated edge.
  Moreover, we have $\ell' = \ell - 1$.
  We thus have
  \begingroup
  \setlength{\belowdisplayskip}{0pt}
  \begin{align*}
    \mu(\mathcal{I}') = \frac{1}{2}(\MM(G') + \IS(G')) - \ell' = \frac{1}{2}(\MM(G) + \IS(G)) - \ell = \mu(\mathcal{I}).
  \end{align*}
  \endgroup
\end{proof}

\subsection{Branching Rules}
\label{ss:algo:br}

In this subsection, we describe our branching rules, which are based on the Gallai--Edmonds decomposition (\Cref{def:gallaiedmonds}).
Recall that the vertex set $V(G)$ is divided into three parts: $A$, $C$, and $D$. (Note that we have to recompute the Gallai--Edmonds decomposition after a vertex is deleted by our reduction rule or branching rule.)
In our algorithm, we apply the first reduction rule for which the condition is met in the following:
\begin{itemize}
  \item If $C \ne \emptyset$, then apply \Cref{br:ac}.
  \item If $A$ is not independent, then apply \Cref{br:aa}.
  \item If there is a connected component $S$ of $G[D]$ such that $G[S]$ is a triangle, then apply \Cref{br:triangle}. 
  \item If there is a connected component $S$ of $G[D]$ such that $G[S]$ is a triangle star, then apply \Cref{br:ts}. 
  \item If there is a connected component $S$ of $G[D]$ with $|S| \ge 5$, then apply \Cref{br:fc}. 
  \item If none of the above holds, then apply \Cref{br:naivea}.
\end{itemize}
For each branching rule, we prove that it is correct and that it decreases the measure by at least $\frac{1}{2}$.
We assume that reduction rules are exhaustively applied throughout the subsection.

In the first branching rule, we apply \Cref{brule:naive} on a vertex in $C$.
\begin{restatable}{brule}{brac}
  \label{br:ac}
  Choose a vertex $v \in C$ with at least two neighbors $u, w \in N(v)$.
  We branch into three instances $\mathcal{I}_i = (G_i, \ell)$ for $i \in [3]$, where $G_1 = G - v$, $G_2 = G - u$, and $G_3 = G - w$.
\end{restatable}
\begin{restatable}{lemma}{corrac}
  \label{lemma:corr:ac}
  \Cref{br:ac} is correct.
\end{restatable}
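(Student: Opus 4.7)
The plan is to observe that \Cref{br:ac} is really just \Cref{brule:naive} applied to the vertex $v$: the only distinguishing feature is that $v$ is drawn from $C$, but that restriction is used to control the measure, not to ensure correctness. So the proof of correctness should not invoke the Gallai--Edmonds decomposition at all, and will essentially be the correctness proof of \Cref{brule:naive} (which the paper sketched but did not formally record).

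The easy direction is that if some $\mathcal{I}_i$ is a yes-instance, then so is $\mathcal{I}$: each $G_i$ is an induced subgraph of $G$, and an induced matching in an induced subgraph is automatically an induced matching in the ambient graph. I would dispose of this in one sentence.

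For the forward direction, suppose $M$ is an induced matching of size $\ell$ in $G$. The key claim I would establish is that $M$ must miss at least one vertex of $\{v, u, w\}$. Assume toward a contradiction that $M$ covers all three, and let $e_v, e_u, e_w \in M$ be the edges incident to $v, u, w$, respectively. If all three of $e_v, e_u, e_w$ are distinct, then $vu \in E(G)$ is an edge between the distinct matching edges $e_v$ and $e_u$, violating the induced matching property. Otherwise, two of $e_v, e_u, e_w$ coincide; that common edge must be one of $uv$, $uw$, or $vw$, and I would check the (very short) case analysis: for instance, if $e_v = e_u = uv$, then $e_w$ is a different edge of $M$, and the edge $vw \in E(G)$ joins $v \in e_v$ to $w \in e_w$, again contradicting the induced matching property. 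The remaining cases are symmetric. Hence $M$ misses some $x \in \{v, u, w\}$, and then $M$ is an induced matching of size $\ell$ in $G - x = G_i$ for the corresponding $i \in [3]$.

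The only minor obstacle is being systematic about the small case analysis when two of $e_v, e_u, e_w$ coincide; I would state it once in full and invoke symmetry to dispatch the others. Neither the membership $v \in C$ nor the structure of the Gallai--Edmonds decomposition is needed here, which is worth pointing out explicitly so that the same argument can be reused later.
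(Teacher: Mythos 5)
Your proposal is correct and matches the paper's approach: the paper's proof of \Cref{lemma:corr:ac} is a one-line reduction to the correctness of \Cref{brule:naive}, and your argument is exactly the (elsewhere only sketched) proof of that rule, including the observation that the membership $v \in C$ plays no role in correctness. The case analysis you outline is sound, so nothing is missing.
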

\begin{proof}
  The correctness follows from the correctness of \Cref{brule:naive}.
\end{proof}

\begin{restatable}{lemma}{aorc}
  \label{lemma:aorc}
  Let $v$ be a vertex in $A \cup C$ and let $G' = G - v$ be the graph obtained by deleting $v$.
  Then, $\MM(G') \le \MM(G) - 1$.
\end{restatable}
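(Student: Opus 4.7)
The plan is to prove this by a direct appeal to the definition of the set $D$ in the Gallai--Edmonds decomposition. The key observation is that $v \in A \cup C$ is equivalent to $v \notin D$, and by \Cref{def:gallaiedmonds}, $v \notin D$ means that no maximum matching of $G$ misses $v$. Equivalently, every maximum matching of $G$ covers $v$.

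From here I would argue by contradiction. Suppose for contradiction that $\MM(G') \ge \MM(G)$, and let $M'$ be a maximum matching of $G' = G - v$. Since $G'$ is a subgraph of $G$, the set $M'$ is also a matching of $G$, and by assumption $|M'| \ge \MM(G)$, so in fact $|M'| = \MM(G)$ and $M'$ is a maximum matching of $G$. However, $M'$ does not use any edge incident to $v$ (since $v \notin V(G')$), so $M'$ misses $v$. This contradicts the fact that every maximum matching of $G$ covers $v$, yielding $\MM(G') \le \MM(G) - 1$.

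On the other hand, the matching inequality $\MM(G') \le \MM(G)$ always holds, so in principle one could worry whether the argument above is tight; however, we only need the upper bound $\MM(G) - 1$, which is exactly what the contradiction yields. There is no real obstacle here: the whole lemma is essentially a restatement of the definition of $D(G)$ via the contrapositive. The only thing to be careful about is to explicitly cite \Cref{def:gallaiedmonds} so that the reader sees that the characterization of $D$ as the set of vertices missed by some maximum matching is being used, rather than the richer structural statement of \Cref{thm:gallaiedmonds}.
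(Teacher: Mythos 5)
Your proof is correct and follows essentially the same route as the paper: both hinge on the single fact that every maximum matching of $G$ covers every vertex of $A \cup C$, from which the drop in $\MM$ is immediate. The only (cosmetic) difference is that you derive this fact directly from the definition of $D(G)$ as the set of vertices missed by \emph{some} maximum matching, whereas the paper cites the Gallai--Edmonds structure theorem; your version is, if anything, slightly more self-contained.
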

\begin{proof}
  By \Cref{thm:gallaiedmonds}, every vertex in $A \cup C$ is matched in every maximum matching.
  Since $G'$ misses one vertex from $A \cup C$, we have $\MM(G') \le \MM(G) - 1$.
\end{proof}

\begin{restatable}{lemma}{mac}
  \label{lemma:m:ac}
  In \Cref{br:ac}, $\mu(\mathcal{I}_i) \le \mu(\mathcal{I}) - \frac{1}{2}$ for all $i \in [3]$.
\end{restatable}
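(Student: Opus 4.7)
The plan is to decompose $\mu(\mathcal{I}_i) - \mu(\mathcal{I}) = \tfrac{1}{2}(\MM(G_i) - \MM(G)) + \tfrac{1}{2}(\IS(G_i) - \IS(G))$ (since $\ell$ does not change) and show each branch loses at least $1$ from $\MM(G) + \IS(G)$. The independent-set part is free: deleting a single vertex can only shrink a maximum independent set, so $\IS(G_i) \le \IS(G)$ for every $i \in [3]$. The work lies in showing $\MM(G_i) \le \MM(G) - 1$, for which I will invoke Lemma~\ref{lemma:aorc}.

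To apply Lemma~\ref{lemma:aorc}, I must verify that each deleted vertex lies in $A \cup C$. For $G_1 = G - v$ this is immediate, since $v \in C$ by the choice in the branching rule. For $G_2 = G - u$ and $G_3 = G - w$, the point is that $u, w \in N(v) \subseteq A \cup C$. This last inclusion is the key structural observation: by Definition~\ref{def:gallaiedmonds}, $A = N(D)$, so any edge from $D$ goes to $D \cup A$; equivalently, no vertex in $C$ has a neighbor in $D$, because such a $C$-vertex would lie in $N(D) = A$, contradicting the disjointness of $A$ and $C$. Hence every neighbor of $v \in C$ belongs to $A \cup C$, and Lemma~\ref{lemma:aorc} applies to all three deletions.

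Putting the two bounds together,
\begin{align*}
  \mu(\mathcal{I}_i) = \tfrac{1}{2}(\MM(G_i) + \IS(G_i)) - \ell \le \tfrac{1}{2}((\MM(G)-1) + \IS(G)) - \ell = \mu(\mathcal{I}) - \tfrac{1}{2},
\end{align*}
for each $i \in [3]$, as required. The main (and essentially only) obstacle is the structural remark that $C$ has no neighbors in $D$; once that is in hand the lemma reduces to a one-line arithmetic computation using Lemma~\ref{lemma:aorc}.
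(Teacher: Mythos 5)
Your proof is correct and follows essentially the same route as the paper's: both apply Lemma~\ref{lemma:aorc} to each deleted vertex after observing that all neighbors of $v \in C$ lie in $A \cup C$, combine this with $\IS(G_i) \le \IS(G)$, and conclude by the same arithmetic. The only difference is that you spell out in more detail why $N(v) \subseteq A \cup C$ (via $A = N(D)$), which the paper takes directly from the definition of the decomposition.
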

\begin{proof}
  By the definition of the Gallai--Edmonds decomposition, the neighbors of $v \in C$ are in $A \cup C$.
  It follows that $u, w \in A \cup C$.
  We thus have, by \Cref{lemma:aorc}, that $\MM(G_i) \le \MM(G) - 1$.
  We also have $\IS(G_i) \le \IS(G)$.
  Consequently, $\mu(\mathcal{I}_i) = \frac{1}{2} (\MM(G_i) + \IS(G_i)) - \ell \le \frac{1}{2} (\MM(G) - 1 + \IS(G)) - \ell = \mu(\mathcal{I}) - \frac{1}{2}$.
\end{proof}

We claim that if we cannot branch according to \Cref{br:ac} (that is, every vertex in $C$ has at most one neighbor), then $C = \emptyset$.
Assume for contradiction that there is a vertex $v \in C$.
Then, $v$ has a neighbor $u$ in $C$, since $G[C]$ has a perfect matching by \Cref{thm:gallaiedmonds}.
If neither $u$ nor $v$ has other neighbors, then \Cref{rr:iso-edge} applies.
Thus, we have $C = \emptyset$.

In the next rule, we branch on two adjacent vertices in $A$ adapting \Cref{brule:naiveb}.

\begin{restatable}{brule}{braa}
  \label{br:aa}
  Choose two adjacent vertices $u, v \in A$.
  We branch into three instances $\mathcal{I}_i = (G_i, \ell)$ for $i \in [3]$.
  In the first two, we delete $u$ or $v$, i.e., $G_1 = G - u$ and $G_2 = G - v$.
  In the third branch, we delete $N(\{ u, v \})$, i.e., $G_3 = G - N(\{ u, v \})$.
\end{restatable}

\begin{restatable}{lemma}{corraa}
  \label{lemma:corr:aa}
  \Cref{br:aa} is correct.
\end{restatable}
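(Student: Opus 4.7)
The plan is to mimic the correctness argument given in the excerpt for Branching Rule \ref{brule:naiveb}, adapted to the present situation. The forward direction is trivial: since each $G_i$ is a subgraph of $G$, any induced matching of size $\ell$ in $G_i$ is also an induced matching of size $\ell$ in $G$. So all work lies in the backward direction: assuming $G$ admits an induced matching $M$ of size $\ell$, I need to produce an $i \in [3]$ for which $M$ (or a slight modification) is an induced matching of size $\ell$ in $G_i$.

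For the backward direction, I would split on whether $uv \in M$. If $uv \in M$, then every other edge of $M$ must be vertex-disjoint from and non-adjacent to $uv$; in particular, no endpoint of any edge of $M$ other than $u, v$ themselves lies in $N(\{u,v\})$. Since moreover $u, v \notin N(\{u,v\})$, every edge of $M$ survives in $G_3 = G - N(\{u,v\})$, so $M$ is an induced matching of size $\ell$ in $G_3$. If $uv \notin M$, then I claim $M$ cannot cover both of $u$ and $v$: otherwise $M$ would contain distinct edges $uu'$ and $vv'$, and the adjacency $uv$ would be an edge between two distinct edges of $M$, contradicting induced-matching-ness. Hence at least one of $u, v$ is missed by $M$, and $M$ is an induced matching of size $\ell$ in the corresponding $G_1$ or $G_2$.

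The argument is essentially the same as the one sketched for Branching Rule \ref{brule:naiveb}; the only real adaptation is that here I need not separately verify $N(\{u,v\}) \neq \emptyset$ for the third branch to make progress (that cleaner observation will be used later in the measure-decrease lemma anyway, because $u \in A = N(D)$ has a neighbor in $D$ distinct from $v \in A$, so the third branch in fact removes at least one vertex). I do not anticipate any real obstacle here: the lemma is a direct specialization of an already-established branching scheme, and the only subtlety — excluding the case where $M$ covers both $u$ and $v$ without using the edge $uv$ — is handled by the induced-matching condition together with the adjacency of $u$ and $v$.
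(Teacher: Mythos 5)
Your argument is correct and matches the paper's approach: the paper simply invokes the correctness of Branching Rule~\ref{brule:naiveb} (checking its precondition $N(\{u,v\})\neq\emptyset$ via Theorem~\ref{thm:gallaiedmonds}), while you inline that same case split on whether $uv\in M$. The only cosmetic difference is that you observe the nonemptiness of $N(\{u,v\})$ is not needed for correctness per se, which is a fair and accurate remark.
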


\begin{proof}
  By \Cref{thm:gallaiedmonds}, $N(\{ u, v \})$ has at least one vertex in $D$ (note that \Cref{brule:naiveb} requires $N(\{ u, v \}) \ne \emptyset$).
  The correctness of \Cref{br:aa} follows from the correctness of \Cref{brule:naiveb}.
\end{proof}

\begin{lemma}
  \label{lemma:m:aa}
  In \Cref{br:aa}, $\mu(\mathcal{I}_i) \le \mu(\mathcal{I}) - \frac{1}{2}$ for all $i \in [3]$.
\end{lemma}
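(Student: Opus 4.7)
The plan is to treat the three branches separately. For $i \in \{1, 2\}$ the bound is immediate: we delete a single vertex $u$ or $v$, both lying in $A$, so \Cref{lemma:aorc} gives $\MM(G_i) \le \MM(G) - 1$, and $\IS(G_i) \le \IS(G)$ holds trivially since independent-set size is nonincreasing under vertex deletion. Adding these yields $\mu(\mathcal{I}_i) \le \mu(\mathcal{I}) - \tfrac{1}{2}$.

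The interesting case is $i = 3$. I would first observe the shape of $G_3 = G - N(\{u,v\})$: because $N(\{u,v\})$ excludes $\{u,v\}$ by definition, both $u$ and $v$ remain in $G_3$, while every other neighbor of $u$ or $v$ has been removed; hence $uv$ is an isolated edge in $G_3$. Since $\IS(G_3) \le \IS(G)$ is automatic, it suffices to prove $\MM(G_3) \le \MM(G) - 1$.

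The main step invokes the Hall-type inequality from the third item of \Cref{thm:gallaiedmonds} with $A' = \{u, v\}$: as $|A'| = 2$, we get $|N_{G'}(\{u, v\})| \ge 3$, so $u$ and $v$ together reach at least three distinct components of $G[D]$. Applying Hall's theorem to the tiny bipartite ``reach'' graph between $\{u, v\}$ and their $D$-component neighbors then lets me pick $u' \in N(u) \cap D$ and $v' \in N(v) \cap D$ lying in different components, so in particular $u' \ne v'$; both of these belong to $N(\{u,v\})$ and are therefore absent from $V(G_3)$. For any matching $M_3$ of $G_3$, the isolation of $uv$ in $G_3$ forces $M_3$ to contain at most one edge incident to $\{u, v\}$, namely possibly $uv$ itself; hence $M := (M_3 \setminus \{uv\}) \cup \{uu', vv'\}$ is a valid matching in $G$ of size at least $|M_3| - 1 + 2 = |M_3| + 1$, yielding $\MM(G) \ge \MM(G_3) + 1$. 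The only step that goes beyond bookkeeping is the appeal to the Gallai--Edmonds Hall inequality: it is precisely what guarantees that $u'$ and $v'$ can be chosen distinct, and without such distinctness the swap would be forced to reuse a single $D$-neighbor and would produce no net gain in matching size.
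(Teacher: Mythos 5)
Your proof is correct, and it reaches the same conclusion as the paper via a genuinely different argument for the crucial third branch. For $G_1$ and $G_2$ you argue exactly as the paper does, through \Cref{lemma:aorc}. For $G_3$, both you and the paper start from the observation that $uv$ is an isolated edge of $G_3$, but then diverge: the paper invokes the characterization of maximum matchings in \Cref{thm:gallaiedmonds} (every vertex of $A$ is matched into $D$ in every maximum matching of $G$), so any maximal matching of $G_3$ contains $uv$, hence cannot be maximum in $G$, hence has size at most $\MM(G)-1$. You instead use the surplus condition $|N_{G'}(A')| > |A'|$ to extract distinct $D$-neighbors $u'$ of $u$ and $v'$ of $v$, and explicitly augment a maximum matching $M_3$ of $G_3$ by trading $uv$ for the two edges $uu'$ and $vv'$; this is sound, since $u'$ and $v'$ lie in the deleted set $N(\{u,v\})$ and are therefore untouched by $M_3$, and the isolation of $uv$ guarantees no other edge of $M_3$ meets $u$ or $v$. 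The paper's route is shorter because it leans on the full maximum-matching characterization; yours is constructive and needs only the Hall-type surplus property, at the cost of the small argument ensuring $u' \ne v'$. Incidentally, that distinctness already follows from $|N_{G'}(\{u\})| \ge 2$ alone: fix any $D$-neighbor $v'$ of $v$, then choose a $D$-neighbor $u'$ of $u$ in a component of $G[D]$ other than the one containing $v'$ --- the appeal to Hall's theorem on the two-vertex side is not needed.
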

\begin{proof}
  Since $G_1$ and $G_2$ each misses a vertex from $A$, we have $\MM(G_1) \le \MM(G) - 1$ and $\MM(G_2) \le \MM(G) - 1$ by \Cref{lemma:aorc}.
  We show that $\MM(G_3) \le \MM(G) - 1$ also holds.
  Any maximal matching $M$ of $G_3$ contains $uv$, since it is isolated.
  Note, however, that $u$ and $v$ are matched to vertices in $D$ in every maximum matching of $G$ by \Cref{thm:gallaiedmonds}.
  It follows that $M$ is not a maximum matching in $G$, implying that $\MM(G_3) \le \MM(G) - 1$.
  Since $G_i \, (i \in [3])$ arises from vertex deletions of $G$, we have $\IS(G_i) \le \IS(G)$.
  Consequently, $\mu(\mathcal{I}_i) = \frac{1}{2} (\MM(G_i) + \IS(G_i)) - \ell \le \frac{1}{2} (\MM(G) - 1 + \IS(G)) - \ell = \mu(G) - \frac{1}{2}$.
\end{proof}

Note that $A$ is an independent set when we cannot apply \Cref{br:aa}.
We will later describe how we branch on the vertices in $D$.
Before doing so, we show that deleting two vertices from a nontrivial (i.e., size at least two) connected component $S$ of $G[D]$ decreases the measure.
(In fact, $S$ is of size at least three because $G[S]$ is factor-critical by \Cref{thm:gallaiedmonds}.)

\begin{lemma}
  \label{lemma:d2}
  Let $S$ be a connected component of $G[D]$ with at least three vertices and let $u, v \in S$.
  Let $G' = G - u - v$ be the graph obtained from $G$ by deleting $u$ and $v$.
  Then, $\MM(G') \le \MM(G) - 1$.
\end{lemma}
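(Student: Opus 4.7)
The plan is to invoke the Tutte--Berge formula $\MM(H) = \frac{1}{2}(|V(H)| - d(H))$, where $d(H) = \max_{U \subseteq V(H)}(o(H - U) - |U|)$ and $o(\cdot)$ counts odd connected components, and to use the \emph{same} subset $U = A$ as a deficiency witness for both $G$ and $G'$. Since $|V(G')| = |V(G)| - 2$, showing $d(G') \ge d(G)$ immediately yields $\MM(G') \le \MM(G) - 1$.

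First I would pin down $d(G) = c - |A|$, where $c$ is the number of connected components of $G[D]$. By \Cref{thm:gallaiedmonds}, every component of $G[D]$ is factor-critical and therefore odd; $G[C]$ admits a perfect matching and so decomposes into even components; and $D$ and $C$ are non-adjacent, because $A = N(D)$ already contains every neighbor of $D$ lying outside $D$. Hence $o(G - A) = c$, giving $d(G) \ge c - |A|$. Equality follows from the explicit maximum matching size $\MM(G) = |C|/2 + |A| + (|D| - c)/2$, which one reads off directly from the characterization of maximum matchings in \Cref{thm:gallaiedmonds}.

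Next I would bound $o(G' - A)$ from below by $c$. Since $u, v \in S$ lie inside a single component $S$ of $G[D]$, the graph $G' - A$ agrees with $G - A$ except that $S$ is replaced by $G[S] - u - v$. Because $|S|$ is odd (factor-critical graphs have odd order), $|S \setminus \{u, v\}|$ is also odd, so $G[S] - u - v$ must contribute at least one odd component. Combined with the $c - 1$ untouched odd components of the rest of $G[D]$, this gives $o(G' - A) \ge c$, hence $d(G') \ge c - |A| = d(G)$, and the Tutte--Berge estimate closes the proof. The one place that deserves care is the Gallai--Edmonds bookkeeping that shows $A$ is tight for $G$; once the matching structure in \Cref{thm:gallaiedmonds} is unpacked, the remainder is a parity observation plus two applications of Tutte--Berge.
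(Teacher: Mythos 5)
Your proof is correct, but it takes a genuinely different route from the paper's. The paper argues directly from the clause of \Cref{thm:gallaiedmonds} stating that every maximum matching of $G$ induces a near-perfect matching on each connected component of $G[D]$: a matching of $G' = G - u - v$ can place at most $\frac{1}{2}(|S|-3)$ edges inside $S$, one fewer than the $\frac{1}{2}(|S|-1)$ that any maximum matching of $G$ must place there, so no matching of $G'$ can reach size $\MM(G)$. You instead package the same structural information into the Tutte--Berge formula, exhibiting $A$ as a deficiency witness for both $G$ and $G'$ and reducing the claim to the parity observation that $G[S] - u - v$ still has odd order, hence contributes at least one odd component to $G' - A$. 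Both arguments are sound, and both ultimately rest on the Gallai--Edmonds structure theorem: you still need it to verify that $A$ is a \emph{tight} witness for $G$, via the explicit value $\MM(G) = \frac{1}{2}|C| + |A| + \frac{1}{2}(|D| - c)$, and your bookkeeping there (odd $D$-components, even $C$-components, no $D$--$C$ edges) is accurate. The paper's version is shorter and avoids invoking Tutte--Berge, which is never stated in the paper; yours makes the deficiency certificate explicit, which cleanly isolates where the hypotheses are used (two deletions inside one odd component of $G - A$ cannot decrease the number of odd components) and would extend without change to deleting any even number of vertices from $S$ short of emptying it.
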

\begin{proof}
  By \Cref{thm:gallaiedmonds}, every maximum matching of $G$ contains a near-perfect matching $M_S$ of $G[S]$.
  Since $G[S]$ is factor-critical, $S$ consists of an odd number of vertices, and we have $|M_S| = \frac{1}{2}(|S| - 1)$.
  Since $u, v \in S$ are deleted from $G'$, any matching in $G'[S]$ contains at most $\lfloor \frac{1}{2}(|S| - 2) \rfloor = \frac{1}{2}(|S| - 3) = |M_S| -1$ edges.
  It follows that $G'$ has no matching of size $\MM(G)$.
\end{proof}

For branching on the vertices in $D$, we start with the connected components of $G[D]$ which form a triangle.
We first prove a lemma concerning such triangles.

\begin{restatable}{lemma}{trihelper}
  \label{lemma:tri:helper}
  Let $S$ be a connected component of $G[D]$ such that $G[S]$ is a triangle.
  There exist two vertices $u, v \in S$ such that $u$ and $v$ have a neighbor $u'$ and $v'$ in $A$, respectively (possibly $u' = v'$).
\end{restatable}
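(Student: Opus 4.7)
The plan is to argue by contradiction using the standing assumption that \Cref{rr:iso-vertex}, \Cref{rr:iso-edge}, and \Cref{rr:ts} have been exhaustively applied. Write $S = \{a, b, c\}$ for the triangle and suppose, toward contradiction, that at most one vertex of $S$ has a neighbor in $A$; without loss of generality, assume $a$ and $b$ have no neighbors in $A$.

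The key structural ingredient is to pin down where the neighbors of a vertex in $D$ can live. By the definition of the Gallai--Edmonds decomposition, $A = N(D)$, so any vertex outside $D$ that is adjacent to a vertex of $D$ must itself lie in $A$; in particular, no vertex of $C$ is adjacent to any vertex of $D$. Combined with the fact that $S$ is a connected component of $G[D]$, this forces the neighbors of any vertex in $S$ to lie in $S \cup A$. Under the assumption that $a$ and $b$ have no neighbors in $A$, their neighborhoods lie entirely inside $S$, and since $|S| = 3$ this gives $\deg(a) = \deg(b) = 2$.

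Consequently, $(a, c, b)$ is a pendant triangle in the sense of the preliminaries (the two outer vertices $a$ and $b$ have degree $2$), and \Cref{rr:ts} would apply to it by deleting $c$, contradicting the assumption of exhaustive reduction. I do not anticipate a real obstacle here: the argument is a short contradiction, and the only point that could be easy to overlook is that $C$ has no edges to $D$, which falls out immediately from $A = N(D)$ but is essential, since otherwise $a$ and $b$ could pick up neighbors in $C$ and we could not conclude $\deg(a) = \deg(b) = 2$.
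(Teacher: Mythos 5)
Your proof is correct and is essentially the paper's argument run in contrapositive form: the paper notes directly that exhaustive application of \Cref{rr:ts} leaves two triangle vertices of degree at least three, whose extra neighbors must lie in $A$ since $S$ is a component of $G[D]$ and $A = N(D)$. Your added observation that $C$ has no neighbors in $D$ is a correct (and implicitly used) detail, not a different route.
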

\begin{proof}
  Since we have applied \Cref{rr:ts} exhaustively, there are two vertices $u, v \in S$ with degree at least three.
  Thus, each of $u$ and $v$ has at least one neighbor in $A$.
\end{proof}

\begin{restatable}{brule}{brtriangle}
  Let $S = \{ u, v, w \}$ be a connected component of $G[D]$ such that $G[S]$ is a triangle and $u$ and $v$ have a neighbor $u'$ and $v'$ in $A$, respectively (such vertices exist by \Cref{lemma:tri:helper}).
  Let $H_1 = G - u'$, $H_2 = G - u$, and $H_3 = G - v$.
  We generate seven instances $\mathcal{I}_i = (G_i, \ell)$ for $i \in [7]$, where $G_1 = H_1$, $G_2 = H_2 - v'$, $G_3 = H_2 - v$, $G_4 = H_2 - w$, $G_5 = H_3 - u'$, $G_6 = H_3 - u$, $G_7 = H_3 - w$.
  \label{br:triangle}
\end{restatable}
\begin{restatable}{lemma}{corrtriangle}
  \label{lemma:corr:triangle}
  \Cref{br:triangle} is correct.
\end{restatable}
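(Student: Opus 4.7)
The plan is to recognize that \Cref{br:triangle} is really a two-level application of the correctness argument for \Cref{brule:naive}. The reverse direction is immediate: every $G_i$ is an induced subgraph of $G$, so any induced matching of size $\ell$ in $G_i$ is also an induced matching of size $\ell$ in $G$. For the forward direction, suppose $G$ has an induced matching $M$ of size $\ell$; I would show $M$ is an induced matching in at least one of $G_1, \ldots, G_7$.

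For the first level, I would apply the correctness argument of \Cref{brule:naive} to the vertex $u$ together with its two neighbors $u' \in A$ and $v \in S$. The key observation underlying \Cref{brule:naive} is that no induced matching can cover all three of $u, u', v$: any matching covering $u$ uses some edge $e$ incident to $u$, and then an edge of $M$ covering $u'$ (or $v$) either equals $e$ or is adjacent to $e$, contradicting that $M$ is induced. Hence $M$ misses at least one of $\{u', u, v\}$, placing $M$ in $H_1 = G-u' = G_1$, in $H_2 = G-u$, or in $H_3 = G-v$. If the first case occurs, we are done.

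The second level handles $H_2$ and $H_3$ by the same principle. In $H_2 = G - u$, the vertex $v$ still has (at least) the two distinct neighbors $v' \in A$ and $w \in S$ in $H_2$, since neither edge $vv'$ nor $vw$ uses $u$ (in the edge case $u' = v'$ this vertex is still present in $H_2$, so $v'$ remains a neighbor of $v$). Applying the same argument, $M$ must miss at least one of $\{v', v, w\}$ in $H_2$, so $M$ is an induced matching in $H_2 - v' = G_2$, $H_2 - v = G_3$, or $H_2 - w = G_4$. The case $M \subseteq H_3$ is symmetric: in $H_3 = G - v$ the vertex $u$ has the two neighbors $u'$ and $w$ (both edges avoid $v$), so $M$ misses one of $\{u', u, w\}$ and lies in $G_5$, $G_6$, or $G_7$.

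The whole argument is essentially bookkeeping on top of \Cref{brule:naive}, so the only minor point to verify is that the second-level vertices $v$ (in $H_2$) and $u$ (in $H_3$) actually retain two distinct neighbors in their respective subgraphs, which I have just checked, including the degenerate case $u' = v'$. Hence no obstacle stands in the way: both directions hold, proving that \Cref{br:triangle} is correct.
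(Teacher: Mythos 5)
Your proof is correct and follows essentially the same route as the paper: a two-level application of the correctness argument for \Cref{brule:naive}, first branching on $u$ with neighbors $u'$ and $v$, and then branching on $v$ in $H_2$ (with neighbors $v', w$) and on $u$ in $H_3$ (with neighbors $u', w$). The extra checks you perform (that the second-level vertices retain two distinct neighbors, including the case $u' = v'$) are implicit in the paper's proof but do not change the argument.
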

\begin{proof}
  Observe that  $u$ is adjacent to $u'$ and $v$ in $G$.
  By the correctness of \Cref{brule:naive}, $\mathcal{I}$ is a yes-instance if and only if one of the three instances $\mathcal{J}_j = (H_j, \ell)$ is a yes-instance for $j \in [3]$.
  Since $v$ has two neighbors $v', w$ in $H_2$, $\mathcal{J}_2$ is a yes-instance if and only if $\mathcal{I}_i$ is a yes-instance for some $i \in [2, 4]$ by the correctness of \Cref{brule:naive}.
  Moreover, since $u$ has two neighbors $u', w$ in $H_3$, $\mathcal{J}_3$ is a yes-instance if and only if $\mathcal{I}_i$ is a yes-instance for some $i \in [5, 7]$ by the correctness of \Cref{brule:naive}.
  Hence, $\mathcal{I}$ is a yes-instance if and only if one of the seven instances $\mathcal{I}_i = (G_i, \ell)$ is a yes-instance for some $i \in [7]$.
  \end{proof}

We verify the drop in measure in \Cref{lemma:ts}.
We then look into the connected components of $G[D]$ which form triangle stars (with at least two pendant triangles).
We show a lemma analogous to \Cref{lemma:tri:helper}.

\begin{restatable}{lemma}{tshelper}
  \label{lemma:ts:helper}
  Let $S$ be a connected component of $G[D]$ such that $G[S]$ is a triangle star with at least two pendant triangles.
  There exist two nonadjacent vertices $u, v \in S$ such that $u$ and $v$ have a neighbor $u'$ and $v'$ in $A$, respectively (possibly $u' = v'$).
\end{restatable}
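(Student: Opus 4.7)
The plan is to exploit the symmetric structure of a triangle star---every triangle shares the central vertex $x$---together with the exhaustive application of \Cref{rr:ts} and the Gallai--Edmonds property that every neighbor of $D$ outside $D$ lies in $A$.

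First I would fix notation: name the central vertex $x$ of the triangle star $G[S]$ and list its $k + 1 \ge 3$ triangles $T_0, T_1, \ldots, T_k$, writing $T_j = (x, y_j, z_j)$. These triangles pairwise intersect exactly in $\{x\}$, and every vertex of $S \setminus \{x\}$ has degree exactly $2$ in $G[S]$. Then I would argue that for each $j$ at least one of $y_j, z_j$ has $G$-degree at least $3$: otherwise $\deg_G(y_j) = \deg_G(z_j) = 2$, so $(y_j, x, z_j)$ is a pendant triangle in the sense of \Cref{rr:ts} with $x$ as its middle vertex, and the rule would still apply, contradicting exhaustive reduction. Call such a chosen outer vertex $r_j$.

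Next I would note that since $\deg_G(r_j) > \deg_{G[S]}(r_j) = 2$, the vertex $r_j$ has at least one neighbor outside $S$; because $r_j \in D$ and $S$ is a connected component of $G[D]$, any such neighbor must lie in $N(D) = A$. Taking $u = r_1$ and $v = r_2$ from two distinct triangles, with arbitrary neighbors $u' \in N(r_1) \cap A$ and $v' \in N(r_2) \cap A$ (possibly equal), then yields the required pair. Nonadjacency of $u$ and $v$ in $G$ follows because distinct triangles of $G[S]$ meet only at $x$, and vertices of $S$ have no neighbors in $D \setminus S$.

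The only delicate step I anticipate is the correct invocation of \Cref{rr:ts}: the degree condition there is on the $G$-degree rather than the $G[S]$-degree, and one must realize that when both outer vertices of some triangle in the triangle star have $G$-degree $2$, it is $x$---not the outer vertex of a pendant triangle---that would be deleted by the rule. Once this is spotted, the remainder is a clean combination of the triangle-star geometry with the Gallai--Edmonds structure theorem.
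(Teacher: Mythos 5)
Your proof is correct and follows essentially the same route as the paper: use the exhaustive application of \Cref{rr:ts} to argue that each pendant triangle of the triangle star has an outer vertex of $G$-degree at least three (else the center would have been deleted), observe that the extra neighbor must lie in $A$ since $S$ is a connected component of $G[D]$, and pick such vertices from two distinct pendant triangles to get nonadjacency. Your write-up is more detailed than the paper's three-line argument, and your observation about the degree condition being in $G$ rather than $G[S]$ is exactly the point the paper relies on implicitly.
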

\begin{proof}
  Let $s$ be the center of $G[S]$, i.e., $s$ has more than two neighbors in $G[S]$.
  Then, by the assumption that \Cref{rr:ts} has been applied exhaustively, for every pendant triangle $(w_1, w_2, s)$ in $G[S]$, at least one of $w_1$ or $w_2$ has at least one neighbor in $A$.
  Thus, the lemma holds.
\end{proof}

\begin{restatable}{brule}{brts}
  \label{br:ts}
  Let $S$ be a connected component of $G[D]$ such that $G[S]$ is a triangle star with at least two pendant triangles.
  Let $u, v \in S$ be vertices as specified in \Cref{lemma:ts:helper}, and let $u',v' \in A$ be neighbors of $u, v$, respectively.
  Also, let $u'', v'' \in S$ be neighbors of $u, v$, respectively, which are not the center of $G[S]$.
  Let $H_1 = G - u'$, $H_2 = G - u$, and $H_3 = G - u''$.
  We generate seven instances $\mathcal{I}_i = (G_i, \ell)$ for $i \in [7]$, where $G_1 = H_1$, $G_2 = H_2 - v$, $G_3 = H_2 - v'$, $G_4 = H_2 - v''$, $G_5 = H_3 - v$, $G_6 = H_3 - v'$, $G_7 = H_3 - v''$.
\end{restatable}

Its correctness can be argued in the same way as we did for \Cref{br:triangle}.
\begin{lemma}
  \label{lemma:corr:ts}
  \Cref{br:ts} is correct.
\end{lemma}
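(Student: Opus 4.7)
The plan is to mirror the proof of Lemma~\ref{lemma:corr:triangle} and derive the seven-way branching from two successive applications of the naïve Branching Rule~\ref{brule:naive}: first to $u$ using the neighbors $u'$ and $u''$, then to $v$ using the neighbors $v'$ and $v''$ in each of the two surviving sub-branches. Since the correctness of Branching Rule~\ref{brule:naive} has already been established, the whole argument reduces to checking that in each stage the chosen ``branching vertex'' does have the two specified neighbors (i.e.\ that these two neighbors are distinct from each other and from the vertex, and are still present in the graph to which the rule is applied).

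First I would verify the preconditions for the first application. By Lemma~\ref{lemma:ts:helper}, $u$ has a neighbor $u' \in A$, and by construction $u''$ is the non-center vertex of $u$'s pendant triangle in $S$, so $u''$ is a neighbor of $u$ in $D$. As $A$ and $D$ are disjoint, $u' \ne u''$, so Branching Rule~\ref{brule:naive} applies to $u$ in $G$ and yields that $\mathcal{I}$ is a yes-instance iff one of $(H_1,\ell),(H_2,\ell),(H_3,\ell)$ is. For the second stage I would argue that $v$ still has the two distinct neighbors $v'$ and $v''$ in both $H_2 = G-u$ and $H_3 = G-u''$: $v',v''$ are distinct since $v' \in A$ and $v'' \in D$; the vertex $u$ is neither $v$ nor $v'$ nor $v''$ because $u$ is nonadjacent to $v$ (Lemma~\ref{lemma:ts:helper}) while $v$ is adjacent to both $v'$ and $v''$ (and $u \in D \ne A \ni v'$); and analogously $u'' \in S$ lies in the same pendant triangle as $u$ but not in the pendant triangle of $v$, so $u'' \notin \{v, v''\}$, while $u'' \in D$ gives $u'' \ne v'$. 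Thus Branching Rule~\ref{brule:naive} applies again, showing that $(H_2,\ell)$ is a yes-instance iff some $\mathcal{I}_i$ with $i \in \{2,3,4\}$ is, and $(H_3,\ell)$ is a yes-instance iff some $\mathcal{I}_i$ with $i \in \{5,6,7\}$ is. Stringing the two equivalences together gives the desired seven-way branching.

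The only subtlety---not really an obstacle---is that Lemma~\ref{lemma:ts:helper} allows $u' = v'$; this causes no trouble because in the first stage $u'$ is deleted in $H_1$ (handled by $\mathcal{I}_1$) and in the subsequent stages we branch on $v$, not on $u'$, so the neighbor relations above remain intact. No further analysis of the Gallai--Edmonds structure or of matching theory is needed: the correctness is purely a syntactic consequence of two nested invocations of Branching Rule~\ref{brule:naive}.
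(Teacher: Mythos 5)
Your proposal is correct and matches the paper's intended argument: the paper proves this lemma by saying it is analogous to the correctness proof of \Cref{br:triangle}, which is exactly the two nested applications of \Cref{brule:naive} you describe (first on $u$ with neighbors $u',u''$, then on $v$ with neighbors $v',v''$ in the branches $H_2$ and $H_3$). Your extra verification that the relevant neighbors are distinct and survive into $H_2$ and $H_3$, including the $u'=v'$ case, is more detailed than the paper but entirely consistent with it.
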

\begin{lemma}
  \label{lemma:ts}
  In \Cref{br:triangle,br:ts}, $\mu(I_i) \le \mu(I) - \frac{1}{2}$ for all $i \in [7]$.
\end{lemma}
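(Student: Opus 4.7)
The plan is to verify, branch by branch, that $\MM(G_i) \le \MM(G) - 1$, since the independent set size cannot increase under vertex deletions and so
\[
\mu(\mathcal{I}_i) = \tfrac{1}{2}(\MM(G_i) + \IS(G_i)) - \ell \le \tfrac{1}{2}(\MM(G) - 1 + \IS(G)) - \ell = \mu(\mathcal{I}) - \tfrac{1}{2}.
\]
The two hammers available are \Cref{lemma:aorc} (deleting any vertex of $A \cup C$ drops the matching number by one) and \Cref{lemma:d2} (deleting two vertices of a nontrivial component of $G[D]$ drops it by one). I expect the whole proof to be case analysis, and the only thing to check per case is which of these two lemmas applies.

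For \Cref{br:triangle}, I would split the seven branches as follows. For $G_1 = G - u'$ the vertex $u' \in A$ is deleted, so \Cref{lemma:aorc} applies directly. For $G_2 = G - u - v'$ and $G_5 = G - v - u'$, the key observation is that a vertex of $A$ is among those deleted, so we obtain $\MM(G_i) \le \MM(G - v') \le \MM(G) - 1$ (resp.\ with $u'$) by \Cref{lemma:aorc} together with monotonicity of $\MM$ under vertex deletion. For $G_3, G_4, G_6, G_7$, in each case two vertices of the triangle component $S \subseteq D$ are deleted. Since $|S| = 3 \ge 3$, \Cref{lemma:d2} gives $\MM(G_i) \le \MM(G) - 1$.

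For \Cref{br:ts}, the argument is parallel. Again $G_1 = G - u'$ invokes \Cref{lemma:aorc}. The branches $G_3 = H_2 - v'$ and $G_6 = H_3 - v'$ delete $v' \in A$, so $\MM$ drops by at least one by the same monotonicity argument. All remaining branches ($G_2, G_4, G_5, G_7$) delete two vertices of $S$, where $S$ is a triangle star with at least two pendant triangles and therefore has at least $3 + 2 \cdot 2 = 7 \ge 3$ vertices. The two deleted vertices lie in the same connected component $S$ of $G[D]$, so \Cref{lemma:d2} again gives the required drop in $\MM$.

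The main (very mild) obstacle is just being careful about the bookkeeping: for the branches in which one vertex from $A$ and one from $S$ are deleted, I need to note that \Cref{lemma:aorc} is applied to the whole of $G$ and then combined with the fact that $\MM$ is monotone non-increasing under vertex deletions, rather than trying to reason about the Gallai--Edmonds decomposition of the intermediate graph $H_2$ or $H_3$ (which need no longer be the same). Once this is spelled out, the bound $\mu(\mathcal{I}_i) \le \mu(\mathcal{I}) - \tfrac{1}{2}$ follows uniformly from $\MM(G_i) \le \MM(G) - 1$ and $\IS(G_i) \le \IS(G)$.
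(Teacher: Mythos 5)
Your proposal is correct and follows essentially the same route as the paper: in every branch either a vertex of $A$ is deleted (so \Cref{lemma:aorc} plus monotonicity of $\MM$ under vertex deletion gives $\MM(G_i) \le \MM(G) - 1$) or two vertices of the component $S$ of $G[D]$ are deleted (so \Cref{lemma:d2} applies), and combining this with $\IS(G_i) \le \IS(G)$ yields the drop of $\frac{1}{2}$ in the measure. Your branch-by-branch bookkeeping, including the explicit remark that for the mixed branches one applies \Cref{lemma:aorc} to $G$ and then uses monotonicity rather than re-deriving the Gallai--Edmonds decomposition of the intermediate graph, is just a more detailed spelling-out of the paper's one-line observation.
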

\begin{proof}
  Observe that we delete a vertex from $A$ or two vertices from $S$ in every branch.
  Thus, we obtain $\MM(G_i) \le \MM(G) - 1$ by \Cref{lemma:aorc,lemma:d2}.
  It follows that $\mu(\mathcal{I}_i) = \frac{1}{2}(\MM(G_i) + \IS(G_i)) - \ell = \frac{1}{2}((\MM(G) - 1) + \IS(G)) - \ell = \mu(\mathcal{I}) - \frac{1}{2}$.
\end{proof}

Finally, we look into nontrivial connected components of $G[D]$ which are not triangles or triangle stars.
We first prove two lemmas that help to develop a branching rule.
Note that each nontrivial component (unless it is a triangle) has at least five vertices.

\begin{restatable}{lemma}{fcpath}
  \label{lemma:fc-path}
  Let $H$ be a connected graph on at least four vertices that is not a star.
  Then, $H$ has a path on four vertices.
\end{restatable}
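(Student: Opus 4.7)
The plan is to argue via a spanning tree of $H$. Fix any spanning tree $T$ of $H$. If $T$ itself contains a path on four vertices, we are done, since $T \subseteq H$. Otherwise, the longest path in $T$ has at most three vertices, i.e., $T$ has diameter at most two; a well-known fact about trees is that a tree of diameter at most two is a star. So $T$ is a star with some center $c$ and at least three leaves (there are at least three leaves because $|V(H)| \ge 4$).

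Next, I would exploit the fact that $H$ itself is not a star. Since $T$ already contains every edge of the star on $V(H)$ centered at $c$, if $H$ equalled this star we would have $H = T$ and $H$ would be a star, contradicting the hypothesis. Hence there is an edge $uv \in E(H) \setminus E(T)$. Moreover, neither $u$ nor $v$ can be the center $c$: every edge incident to $c$ in $H$ would be an edge of the star $T$, hence would already belong to $T$. Therefore $u$ and $v$ are both leaves of $T$.

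Finally, since $T$ has at least three leaves, we may pick a leaf $w \notin \{u, v\}$. Then the sequence $w, c, u, v$ consists of four distinct vertices and the consecutive pairs $wc$, $cu$, $uv$ are all edges of $H$ (the first two are tree edges and the last is the non-tree edge chosen above). This is the desired path on four vertices.

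The main obstacle is really only the observation that a non-tree edge of a star-shaped spanning tree must join two leaves; once that is in place, the rest is a direct verification. A minor point worth being careful about is ruling out the degenerate case $|V(H)| = 4$ with $T$ having only two leaves, but $T$ being a star on four vertices forces at least three leaves, so this does not arise.
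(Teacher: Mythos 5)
Your proof is correct. It takes a somewhat different route from the paper's: the paper picks a vertex $v$ of maximum degree and splits on whether $\deg(v) = |V(H)| - 1$, handling the non-universal case by walking out of $N(v)$ via connectivity, whereas you split on the shape of a spanning tree $T$. The two arguments coincide in one branch --- your ``$T$ is a spanning star centered at $c$'' case is exactly the paper's ``universal vertex'' case, since such a $c$ is adjacent to all other vertices of $H$, and in both proofs the non-star hypothesis then yields an edge between two neighbors of the center which, together with a third neighbor, forms the $P_4$. Your handling of the remaining case is arguably cleaner: if $T$ is not a star it has diameter at least $3$ and the $P_4$ sits inside the tree already, with no further argument needed, while the paper must still reason about a vertex at distance two from $v$. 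All the small points you flag check out: a spanning star on at least four vertices has at least three leaves, and any edge of $H$ incident to the center already lies in $T$, so the extra edge guaranteed by $H \ne T$ must join two leaves. The only cost of your approach is invoking the (standard) fact that a tree of diameter at most two is a star; the paper's argument is self-contained at the price of a slightly longer second case.
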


\begin{proof}
  Let $v$ be a vertex of maximum degree in $H$.
  We consider two cases:
  If $\deg(v) = |V(H)| - 1$, then by the assumption that $H$ is not a star, we have two adjacent vertices $w, x \in N(v)$.
  Since $|N(v)| = |V(H)| - 1 \ge 3$, there is a vertex $u \in N(v)$ distinct from $w$ and $x$.
  We thus have a path $(u, v, w, x)$.
  Otherwise, we have $\deg(v) < |V(H)| - 1$.
  We can assume that $\deg(v) \ge 2$ because $H$ is a connected graph on at least four vertices.
  Since $H$ is connected, there exist two adjacent vertices $w \in N(v)$ and $x \in V(H) \setminus (N(v) \cup \{ v \})$. 
  Since $\deg(v) \ge 2$, there is a vertex $u \in N(v)$ distinct from $w$.
  We thus have a path $(u, v, w, x)$.
\end{proof}

\begin{restatable}{lemma}{fcchoice}
  \label{lemma:fc-choice}
  Let $H$ be a factor-critical graph that is not a triangle star and let $v$ be an arbitrary vertex in $V(H)$.
  Then, $H - v$ has a vertex of degree at least two.
\end{restatable}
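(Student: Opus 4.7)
The plan is to argue by contradiction: suppose every vertex of $H - v$ has degree at most one in $H - v$. Since $H$ is factor-critical, $H - v$ admits a perfect matching $M$, and the degree bound forces $E(H - v) = M$, i.e., $H - v$ is a disjoint union of matching edges (with, in particular, $|V(H)|$ odd and $|V(H) \setminus \{v\}|$ even).

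Next I would invoke factor-criticality a second time to pin down the neighborhood of $v$. Pick any matching edge $\{a, b\} \in M$. Because $H$ is factor-critical, $H - a$ has a perfect matching $M'$. In $H - a$, however, the vertex $b$ has no neighbor in $V(H) \setminus \{v, a\}$: any such neighbor would be a second neighbor of $b$ inside $H - v$, contradicting that $H - v$ is a disjoint union of matching edges. Hence the only candidate partner of $b$ in $M'$ is $v$, which forces $vb \in E(H)$; symmetrically, $va \in E(H)$. Thus $\{v, a, b\}$ spans a triangle and $a, b$ have degree exactly $2$ in $H$.

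Applying this to every matching edge of $M$, the graph $H$ is obtained from $v$ by attaching, for each edge $\{a_i, b_i\} \in M$, a triangle $\{v, a_i, b_i\}$ in which $a_i$ and $b_i$ are degree-$2$ vertices. If $|V(H)| = 3$, then $|M| = 1$ and $H$ itself is a triangle, which is a triangle star, contradicting the hypothesis. Otherwise $|V(H)| \ge 5$ (as $H$ is factor-critical and larger than a triangle), so $|M| \ge 2$; designating one triangle $\{v, a_1, b_1\}$ as the base and viewing the remaining triangles as pendant triangles attached to $v$, we see $H$ is a triangle star, again contradicting the hypothesis.

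The main obstacle is the second step: the argument that both endpoints of every matching edge of $H - v$ are adjacent to $v$. This is the place where factor-criticality is used nontrivially, and it is what ultimately collapses the putative counterexample into a triangle star.
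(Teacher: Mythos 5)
Your proof is correct and follows essentially the same route as the paper's: assume all degrees in $H-v$ are at most one, so $H-v$ decomposes into the edges of a perfect matching, then apply factor-criticality at an endpoint of each matching edge to force both endpoints to be adjacent to $v$, and conclude that $H$ is a triangle star, a contradiction. The only cosmetic difference is that the paper phrases the key step via connected components of $H-v$ and the isolation of a vertex after deleting its partner, while you phrase it via the forced matching partner in $H-a$; these are the same argument.
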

\begin{proof}
  Let $S_1, \dots, S_c$ be the connected components of $H - v$.
  Without loss of generality, assume that $|S_1| \ge \dots \ge |S_c|$.
  We show that $|S_1| \ge 3$.
  Since $H$ is factor-critical, $H - v$ has a perfect matching by definition.
  It follows that every connected component $S_i$ has at least two vertices.
  For the sake of contradiction, assume that $|S_i| = 2$ for every $i \in [c]$.
  We claim that for every connected component $S_i = \{ u_i, w_i \}$, $v$ is adjacent to both $u_i$ and $w_i$ in $H$. 
  If $v$ is not adjacent to $u_i$ (or $w_i$), then $G - w_i$ (or $G - v_i$, respectively) has no perfect matching because $u_i$ (or $w_i$, respectively) is isolated.
  Thus, $v$ is adjacent to every vertex in $H$.
  This, however, implies that $H$ is a triangle star, which contradicts our assumption.
  Thus, we have $|S_1| \ge 3$.
  Since $H[S_1]$ is connected, there is a vertex in $S_1$ of degree at least two in $H - v$.
\end{proof}

\begin{brule}
  \label{br:fc}
  Let $S$ be a nontrivial connected component of $G[D]$ with $|S| \ge 5$ which is not a triangle star.
  Choose a path $(u, v, w, x)$ on four vertices in $G[S]$ (such a path exists by \Cref{lemma:fc-path}).
  We have seven branches $\mathcal{I}_i = (G_i, \ell)$ as follows:
  \begin{enumerate}
    \item
      Choose a vertex $v'$ with at least two neighbors $v_1', v_2' \in S$ in $G - v$ (such a vertex exists by \Cref{lemma:fc-choice}).
      Let $G_1 = G - v - v'$, $G_2 = G - v - v_1'$, and $G_3 = G - v - v_2'$.
    \item
      Choose a vertex $w'$ with at least two neighbors $w_1', w_2' \in S$ in $G - w$ (such a vertex exists by \Cref{lemma:fc-choice}).
      Let $G_4 = G - w - w'$, $G_5 = G - w - w_1'$, and $G_6 = G - w - w_2'$.
    \item
      Let $G_7 = G - N(\{ v , w \})$.
  \end{enumerate}
\end{brule}
\par\vspace{-2.5ex}
\begin{lemma}
  \label{lemma:corr:fc}
  \Cref{br:fc} is correct.
\end{lemma}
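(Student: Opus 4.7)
The plan is to chain applications of \Cref{brule:naive,brule:naiveb}, exactly in the manner of the proof of \Cref{lemma:corr:triangle}. First I would observe that $v$ and $w$ are adjacent (they lie consecutively on the path $(u,v,w,x)$) and that $N(\{v,w\}) \ne \emptyset$, since $u \in N(v) \setminus \{v,w\}$. Applying \Cref{brule:naiveb} to the edge $vw$ therefore gives that $\mathcal{I}$ is a yes-instance if and only if at least one of the three intermediate instances $(G - v, \ell)$, $(G - w, \ell)$, or $(G - N(\{v,w\}), \ell) = (G_7, \ell)$ is a yes-instance.

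Next I would refine the first two intermediate instances using \Cref{brule:naive}. By \Cref{thm:gallaiedmonds}, the component $G[S]$ is factor-critical, and by the hypothesis of the branching rule it is not a triangle star. Hence \Cref{lemma:fc-choice} applied with $H = G[S]$ guarantees a vertex $v' \in S \setminus \{v\}$ of degree at least two in $G[S] - v$; its two neighbors $v_1', v_2' \in S$ are likewise neighbors of $v'$ in $G - v$. Thus \Cref{brule:naive} applies to $v'$ inside $G - v$ and yields that $(G - v, \ell)$ is a yes-instance iff one of $(G_1, \ell), (G_2, \ell), (G_3, \ell)$ is a yes-instance. A symmetric argument with a suitable $w' \in S \setminus \{w\}$, furnished by the same lemma applied to $G[S] - w$, handles $(G - w, \ell)$ and produces $(G_4, \ell), (G_5, \ell), (G_6, \ell)$. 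Composing these three layers of equivalences yields exactly the seven instances $\mathcal{I}_1, \dots, \mathcal{I}_7$ listed in the branching rule.

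I do not anticipate any genuinely difficult step here; the one thing that must be checked carefully is that the preconditions of each invoked branching rule continue to hold after the preceding vertex deletions. This is precisely what \Cref{lemma:fc-path} (guaranteeing the path $(u,v,w,x)$ so that $vw$ is an edge with $u, x$ witnessing $N(\{v,w\}) \ne \emptyset$) and \Cref{lemma:fc-choice} (guaranteeing the degree-two vertices $v'$ and $w'$ inside $G - v$ and $G - w$, respectively) are designed to provide, so no further combinatorial work is required beyond invoking \Cref{brule:naive,brule:naiveb}.
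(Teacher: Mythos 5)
Your proposal is correct and follows essentially the same route as the paper's proof: apply \Cref{brule:naiveb} to the edge $vw$ (with $u$, $x$ witnessing $N(\{v,w\})\ne\emptyset$) and then refine the branches $(G-v,\ell)$ and $(G-w,\ell)$ via \Cref{brule:naive} using the degree-two vertices supplied by \Cref{lemma:fc-choice}. The paper's own argument is just a terser statement of exactly this composition.
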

\par\vspace{-1.5ex}
\begin{proof}
Observe that $\mathcal{I}$ is a yes-instance if and only if at least one of $(G - v, \ell)$, $G(G - w, \ell)$, and $G(G - N(\{ v, w \}), \ell)$ is a yes-instance by the correctness of \Cref{brule:naiveb}.
We branch further on the first two instances $(G - v, \ell)$ and $G(G - w, \ell)$ as in \Cref{brule:naive} to end up with the seven instances given above. 
\end{proof}
\par\vspace{-2.5ex}
\begin{lemma}
  \label{lemma:fc}
  In \Cref{br:fc}, $\mu(\mathcal{I}_i) \le \mu(\mathcal{I}) - \frac{1}{2}$ for all $i \in [7]$.
\end{lemma}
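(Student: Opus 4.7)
The plan is to split the seven branches into two groups: the first six fall under a single structural argument via \Cref{lemma:d2}, while the seventh requires a direct matching-augmentation argument.

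For $i \in \{1,\dots,6\}$, I would first verify that each $G_i$ is obtained from $G$ by deleting two distinct vertices both lying in the component $S$. The path endpoints $v,w$ are in $S$ by construction. In the first group of three branches, \Cref{lemma:fc-choice}, applied to the factor-critical graph $G[S]$ (which is not a triangle star by the precondition of the rule and has at least four vertices), yields that $v'$ together with its two chosen neighbors $v_1', v_2'$ all lie in $S$; a symmetric statement handles $w', w_1', w_2'$ for the second group. Since $|S|\ge 5\ge 3$, \Cref{lemma:d2} gives $\MM(G_i)\le \MM(G)-1$, and combined with the trivial $\IS(G_i)\le \IS(G)$ this yields $\mu(\mathcal{I}_i)\le \mu(\mathcal{I})-\tfrac12$ by the same calculation as in \Cref{lemma:ts}.

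For $G_7 = G - N(\{v,w\})$, the lemmas above do not apply because the deleted set mixes vertices of $A$ and of $S$. The key observation is that $vw$ becomes an isolated edge in $G_7$: an unfolding of the definition shows $N_{G_7}(v)=\{w\}$ and $N_{G_7}(w)=\{v\}$, since every other neighbor of $v$ or $w$ in $G$ lies in $N(\{v,w\})$ and is therefore deleted. Now take any maximum matching $M_7$ of $G_7$; because $vw$ is isolated, we may assume $vw\in M_7$. The trick is to exploit the full 4-vertex path $(u,v,w,x)\subseteq G[S]$ guaranteed by \Cref{lemma:fc-path} (applicable because a factor-critical graph on at least four vertices is not a star): the endpoints $u$ and $x$ both belong to $N(\{v,w\})$, so they are absent from $G_7$ and untouched by $M_7\setminus\{vw\}$. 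Hence $(M_7\setminus\{vw\})\cup\{uv,wx\}$ is a valid matching in $G$ of size $|M_7|+1$, proving $\MM(G_7)\le \MM(G)-1$, which combined with $\IS(G_7)\le \IS(G)$ gives $\mu(\mathcal{I}_7)\le \mu(\mathcal{I})-\tfrac12$.

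The main obstacle is precisely the $\mathcal{I}_7$ branch, where the deleted set crosses the Gallai--Edmonds parts and no earlier lemma applies. The critical insight is that choosing a \emph{4}-vertex path rather than a 3-vertex path is exactly what makes the augmentation go through, because it supplies two ``exterior'' neighbors $u,x$ of the pair $\{v,w\}$ that are both forced into the deleted set, enabling the length-two swap $\{vw\}\mapsto\{uv,wx\}$ that boosts $M_7$ by one edge.
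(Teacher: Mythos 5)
Your proof is correct, and for the first six branches it coincides with the paper's argument: each $G_i$ ($i\in[6]$) loses two distinct vertices of $S$, so \Cref{lemma:d2} applies and the measure drops by $\frac{1}{2}$. Where you diverge is the seventh branch. The paper disposes of $G_7 = G - N(\{v,w\})$ with the \emph{same} lemma, simply observing that $\{u,x\}\subseteq N(\{v,w\})$ with $u,x\in S$; since $G_7$ is an induced subgraph of $G-u-x$ and $\MM$ cannot increase under vertex deletion, $\MM(G_7)\le\MM(G-u-x)\le\MM(G)-1$ follows from \Cref{lemma:d2}. Your claim that the earlier lemmas ``do not apply because the deleted set mixes vertices of $A$ and of $S$'' is therefore not quite right---deleting additional vertices only helps, so it suffices that \emph{at least} two deleted vertices lie in $S$. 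That said, your replacement argument for $G_7$ is sound and in fact more self-contained: forcing the isolated edge $vw$ into a maximum matching of $G_7$ and swapping it for the two path edges $uv$ and $wx$ (whose endpoints $u,x$ are deleted and hence uncovered) proves $\MM(G_7)\le\MM(G)-1$ directly, without routing through the Gallai--Edmonds structure theorem that underlies \Cref{lemma:d2}. Both routes yield the same bound; the paper's is shorter, while yours makes explicit why a path on four (rather than three) vertices is chosen in \Cref{br:fc}.
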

\par\vspace{-1.5ex}
\begin{proof}
  Note that every branch deletes at least two vertices from $S$ (in particular, $\{ u, x \} \subseteq N(\{ v, w\})$).
  We thus have $\MM(G_i) \le \MM(G) - 1$ by \Cref{lemma:d2}.
  It follows that $\mu(\mathcal{I}_i) = \frac{1}{2}(\MM(G_i) + \IS(G_i)) - \ell \le \frac{1}{2}(\MM(G) - 1 + \IS(G)) - \ell = \mu(\mathcal{I}) - \frac{1}{2}$.
\end{proof}

If the branching reduction rules given so far are not applicable, then $G$ is a bipartite graph whose partition is $A$ and $D$:
We have $C = \emptyset$, since otherwise we can apply \Cref{br:ac}.
We also have that $A$ is an independent set, since otherwise \Cref{br:aa} applies.
Moreover, every connected component $S$ of $G[D]$ is trivial:
Note that $S$ consists of an odd number of vertices because $G[S]$ is factor-critical.
If $|S| = 3$, then $S$ is a triangle and hence we can apply \Cref{br:triangle}.
If $|S| \ge 5$, then we can apply \Cref{br:ts} or \Cref{br:fc}.
Thus, every connected component of $G[D]$ is trivial, implying that $D$ is an independent set as well.
Now, \Cref{brule:naive} actually decreases the measure:

\begin{brule}
  \label{br:naivea}
  Choose a vertex $v \in V(G)$ with at least two neighbors $u, w \in V(G)$.
  We branch into $\mathcal{I}_i = (G_i, \ell)$ for $i \in [3]$, where $G_1 = G - v$, $G_2 = G - u$, and $G_3 = G - w$.
\end{brule}
\par\vspace{-1.5ex}
It is correct since \Cref{brule:naive} is correct.
\begin{lemma}
  \label{lemma:corr:naivea} 
  \Cref{br:naivea} is correct.
\end{lemma}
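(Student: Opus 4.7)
The plan is to observe that Branching Rule \ref{br:naivea} is, verbatim, an instance of Branching Rule \ref{brule:naive}: both select a vertex $v \in V(G)$ with at least two neighbors $u, w$ and spawn the three instances $(G-v,\ell)$, $(G-u,\ell)$, $(G-w,\ell)$. Since Branching Rule \ref{brule:naive} was already justified in \Cref{ss:prem:algo} as the basis of the $\Ostar(9^k)$-time algorithm for \textsc{IMBTG}, the correctness claim reduces to a one-line appeal to that earlier fact. Importantly, the bipartite structure of $G$ (with parts $A$ and $D$) that holds when we reach Branching Rule \ref{br:naivea} is only used for the measure analysis elsewhere and plays no role in correctness here.

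For self-containment I would spell out the underlying equivalence explicitly. The $(\Leftarrow)$ direction is immediate: an induced matching of size $\ell$ in any $G_i = G - x$ is also an induced matching of size $\ell$ in $G$. For the $(\Rightarrow)$ direction, suppose $G$ admits an induced matching $M$ of size $\ell$. If $M$ does not cover $v$, then $M$ is an induced matching in $G_1 = G - v$. Otherwise let $vx \in M$ be the edge covering $v$. Because $v$ has two distinct neighbors $u, w$, at least one of them, say $u$, is distinct from $x$. The vertex $u$ must then be uncovered by $M$: any edge $uy \in M$ would form, together with $vx$, two edges of $M$ joined by the edge $uv$, violating the induced-matching property. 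Hence $M$ is an induced matching in $G_2 = G - u$; the symmetric case $w \ne x$ places $M$ in $G_3 = G - w$.

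The main obstacle is essentially none; the only sanity check is that the chosen $v, u, w$ in Branching Rule \ref{br:naivea} satisfy the hypothesis of Branching Rule \ref{brule:naive}, which is true by definition. The proof therefore consists of a single invocation of the earlier correctness argument.
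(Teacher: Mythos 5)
Your proposal is correct and matches the paper, which likewise disposes of this lemma in one line by noting that \Cref{br:naivea} is a verbatim instance of \Cref{brule:naive} and inheriting its correctness; the bipartite structure of $G$ at this point is indeed only needed for the measure analysis in \Cref{lemma:m:naivea}, not for correctness. Your optional spelled-out argument (at least one of $v,u,w$ is uncovered by any induced matching, since $G[V(M)]$ has maximum degree one) is exactly the justification the paper gives for \Cref{brule:naive} itself.
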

\begin{lemma}
  \label{lemma:m:naivea}
  In \Cref{br:naivea}, $\mu(\mathcal{I}_i) \le \mu(\mathcal{I}) - \frac{1}{2}$ for all $i \in [3]$.
\end{lemma}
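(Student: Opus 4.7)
The plan is to exploit the very restrictive structure that must hold whenever \Cref{br:naivea} is reached. By the analysis immediately preceding the rule, at that point $C = \emptyset$, the set $A$ is independent, and every connected component of $G[D]$ is trivial, so $D$ is also independent. In particular, $G$ is bipartite with bipartition $(A, D)$.

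In this bipartite setting, I would invoke two classical identities. First, K\H{o}nig's theorem (\Cref{thm:konig}) gives $\MM(G) = \VC(G)$. Second, the Gallai identity $\IS(G) + \VC(G) = |V(G)|$ holds for any graph. Combining these, on our bipartite $G$ we obtain
\begin{align*}
  \MM(G) + \IS(G) = \VC(G) + \IS(G) = |V(G)|.
\end{align*}

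Each of the three graphs $G_1, G_2, G_3$ produced by \Cref{br:naivea} arises from $G$ by deleting a single vertex, so each is still bipartite (a subgraph of a bipartite graph) and has exactly $|V(G)| - 1$ vertices. Applying the same identity to $G_i$ yields
\begin{align*}
  \MM(G_i) + \IS(G_i) = |V(G_i)| = |V(G)| - 1 = \MM(G) + \IS(G) - 1.
\end{align*}
Since $\ell$ is unchanged, it follows that $\mu(\mathcal{I}_i) = \frac{1}{2}(\MM(G_i) + \IS(G_i)) - \ell = \mu(\mathcal{I}) - \frac{1}{2}$, as required.

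I do not expect any real obstacle here; the main point is simply to recognize that the precondition for \Cref{br:naivea} forces $G$ to be bipartite, after which K\H{o}nig's theorem and the Gallai identity immediately collapse $\MM + \IS$ to $|V(G)|$, making a vertex deletion drop the measure by exactly $\frac{1}{2}$. The only thing worth being explicit about is justifying why $D$ is independent, namely that every nontrivial component of $G[D]$ would be factor-critical of odd order $\ge 3$ and hence would trigger \Cref{br:triangle}, \Cref{br:ts}, or \Cref{br:fc} before \Cref{br:naivea} is ever reached.
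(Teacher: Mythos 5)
Your proposal is correct and follows essentially the same route as the paper: both arguments use the bipartiteness of $G$ (forced by the preconditions of \Cref{br:naivea}) together with K\H{o}nig's theorem and the identity $\VC(G) + \IS(G) = |V(G)|$ to rewrite the measure as $\frac{1}{2}|V(G)| - \ell$, so that deleting one vertex drops it by exactly $\frac{1}{2}$. No gaps; your extra remark justifying why $D$ is independent is a welcome explicit touch but matches the paper's own preceding discussion.
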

\begin{proof}
  Recall that $\VC(H)$ denote the minimum vertex cover size of a graph $H$.
  Since $G$ and $G_i$ are both bipartite, we have $\VC(G) = \MM(G)$ and $\VC(G_i) = \MM(G_i)$ by K\H{o}nig's theorem (\Cref{thm:konig}).
  It follows that $\mu(\mathcal{I}) = \frac{1}{2}(\MM(G) + \IS(G)) - \ell = \frac{1}{2}(\VC(G) + \IS(G)) - \ell = \frac{1}{2}|V(G)| - \ell$.
  Analogously, we obtain $\mu(\mathcal{I}_i) = \frac{1}{2}|V(G_i)| - \ell$.
  Since $|V(G_i)| = |V(G)| - 1$, we have $\mu(\mathcal{I}) - \mu(\mathcal{I}_i) = \frac{1}{2}(|V(G)| - |V(G_i)|) = \frac{1}{2}$.
\end{proof}


\subsection{Correctness and Running Time Analysis}
\label{ss:algo:corr}

We prove \Cref{thm:main} in this subsection.
For the correctness of our algorithm (the outline is given in \Cref{ss:algo:overview}), we need to show that if the input $\mathcal{I}$ is a yes-instance, then our algorithm correctly determines that $\mathcal{I}$ is a yes-instance after branching $2k$ times (avoiding \Cref{tc:2k}).
We will show that if we end up with a yes-instance $\mathcal{I}' = (G', \ell')$ after $2k$ branching steps, then $\frac{1}{2}(\MM(G') + \IS(G')) = \IM(G')$ (equivalently, $\MM(G') = \IS(G') = \IM(G')$ since $\MM(G') \ge \IM(G')$ and $\IS(G') \ge \IM(G')$) holds. 
We examine the structure of graphs $G'$ with $\frac{1}{2}(\MM(G') + \IS(G')) = \IM(G')$ in \Cref{lemma:cw-adapt}.
We find that for a yes-instance $\mathcal{I}' = (G', \ell')$ with $\frac{1}{2}(\MM(G') + \IS(G')) = \IM(G')$, our algorithm terminates with $\ell = 0$ returning yes (\Cref{tc:yes}) after reduction rules are applied exhaustively (\Cref{lemma:algo:allrr}).

\begin{restatable}{lemma}{cwadapt}
  \label{lemma:cw-adapt}
  For a connected graph $G$, if $\frac{1}{2}(\MM(G) + \IS(G)) = \IM(G)$, then one of the following holds:
  \begin{itemize}
    \item
      $G$ is an isolated edge.
    \item
      $G$ is a triangle star.
    \item
      $G$ is obtained from a connected bipartite graph $G'$ with a bipartition $V(G') = U \cup W$ by adding exactly one pendant vertex to each vertex of $U$ and adding at least one pendant triangle to each vertex of $W$.
      In particular, $\MM(G) = \IS(G) = \IM(G) = \frac{1}{2}(|V(G)| - |W|)$.
  \end{itemize}
\end{restatable}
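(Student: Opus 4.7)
The plan is to reduce the hypothesis to a triple equality, invoke the Cameron--Walker theorem (\Cref{thm:cameronwalker}), and then use the remaining equation to rule out cases and pin down parameters.

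I would first observe that $\MM(G)\ge\IM(G)$ and $\IS(G)\ge\IM(G)$, so $\frac{1}{2}(\MM(G)+\IS(G))=\IM(G)$ is equivalent to $\MM(G)=\IS(G)=\IM(G)$. Since $\MM(G)=\IM(G)$ and $G$ is connected, \Cref{thm:cameronwalker} puts $G$ into one of three classes: (i) a star $K_{1,m}$, (ii) a triangle star, or (iii) a graph built from a connected bipartite $G'$ on $U\cup W$ by attaching $p(u)\ge 1$ pendants at each $u\in U$ and $t(w)\ge 0$ pendant triangles at each $w\in W$.

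Next, I would dispatch each case using the extra equation $\IS(G)=\IM(G)$. For (i), $\IM(K_{1,m})=1$ while $\IS(K_{1,m})=m$, forcing $m=1$ and making $G$ an isolated edge. For (ii), a direct count on a triangle star with $t\ge 0$ pendant triangles (central triangle plus $t$ pendant triangles at one vertex) shows $\MM=\IS=\IM=t+1$ by exhibiting explicit (induced) matchings and independent sets, so every triangle star qualifies unconditionally. For (iii), I would derive
\begin{align*}
  \MM(G)=\IM(G) &= |U|+\sum_{w\in W}t(w),\\
  \IS(G) &= \sum_{u\in U}p(u)+\sum_{w\in W}\max(t(w),1),
\end{align*}
the first by combining the pendant edges with one edge per pendant triangle (using Cameron--Walker to certify optimality), and the second by including all pendants, one endpoint per pendant triangle, plus any $w$ with $t(w)=0$. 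Equating the two gives
\begin{equation*}
  \sum_{u\in U}(p(u)-1)+\sum_{w\in W}(\max(t(w),1)-t(w))=0,
\end{equation*}
and since every summand is nonnegative each must vanish, yielding $p(u)=1$ for all $u\in U$ and $t(w)\ge 1$ for all $w\in W$, exactly the description in the third bullet. The final numerical identity $\IM(G)=\frac{1}{2}(|V(G)|-|W|)$ then drops out from $|V(G)|=2|U|+|W|+2\sum_{w\in W}t(w)$ once $p(u)=1$ is substituted.

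The main obstacle is cleanly justifying the formula for $\IS(G)$ in case (iii): I need to argue that no mixed strategy---including $u\in U$ instead of its pendants, or including $w\in W$ with $t(w)\ge 1$ instead of its pendant-triangle leaves---can beat the displayed expression. A short exchange argument handles this: starting from any maximum independent set, swapping $u$ out in favor of its $p(u)\ge 1$ leaves never decreases the size, and swapping $w$ (with $t(w)\ge 1$) out in favor of one leaf per pendant triangle changes the size by $t(w)-1\ge 0$. After these swaps the displayed expression is realized and is therefore tight. The value of $\MM(G)$ is easier since Cameron--Walker already yields $\MM(G)=\IM(G)$, so the explicit induced matching of the claimed size automatically pins down both.
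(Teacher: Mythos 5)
Your proposal is correct and follows essentially the same route as the paper's proof: reduce the hypothesis to $\MM(G)=\IS(G)=\IM(G)$, invoke \Cref{thm:cameronwalker}, and in the third case compare the formulas $\MM(G)=|U|+\sum_{w\in W}t(w)$ and $\IS(G)=\sum_{u\in U}p(u)+\sum_{w\in W}\max(t(w),1)$ to force $p(u)=1$ and $t(w)\ge 1$. Your explicit exchange argument for the independent-set formula only fills in a step the paper states more briefly.
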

\begin{proof}
  Suppose that $G$ is a connected graph with $\frac{1}{2}(\MM(G) + \IS(G)) = \IM(G)$.
  We show that $G$ satisfies one of the above.
  Since $\MM(G) = \IS(G) = \IM(G)$, $G$ is a Cameron--Walker graph.
  We assume that $G$ has at least two vertices, since otherwise $\MM(G) = \IM(G) = 0$ and $\IS(G) = 1$.
  We examine each case of \Cref{thm:cameronwalker}.
  \begin{itemize}
    \item
      Suppose that $G$ is a star with $n - 1$ pendant vertices.
      It is easy to see that $\IS(G) = n - 1$ and $\IM(G) = 1$.
      Thus, we obtain $n = 2$.
    \item
      The second case of \Cref{thm:cameronwalker} states that $G$ is a triangle star.
      So we are done.
    \item
      Suppose that $G$ is obtained from a bipartite graph $G'$ with a bipartition $V(G') = U \cup W$ by adding at least one pendant vertex to each vertex of $U$ and adding any number of pendant triangles to each vertex of $W$.
      Let $n_u$ be the number of pendant vertices attached to $u$ for every $u \in U$ and let $n_w$ be the number of pendant triangles attached to $w$ for every vertex $w \in W$.
      We show that $n_u = 1$ for each $u \in U$ and $n_w \ge 1$ for each $w \in W$.

      First, we show that $\MM(G) = \IM(G) = |U| + \sum_{w \in W} n_w$.
      Let $M$ be a maximum (induced) matching of $G$.
      Since every vertex $u \in U$ has at least one pendant vertex adjacent to it, we can assume that $u$ is matched to one of its pendant neighbors in $M$.
      The deletion of $U$ and pendant vertices in $N(U)$ leaves $|W|$ triangle stars.
      The triangle star containing $w \in W$ has $n_w$ pendant triangles.
      Thus, $|M| = |U| + \sum_{w \in W} n_w$.

      We then show that $\IS(G) = \sum_{u \in U} n_u + \sum_{w \in W} \max(n_w, 1)$.
      We can assume that a maximum independent set $I$ contains all pendant vertices.
      Then, $I$ contains no vertex of $U$.
      After deleting all pendant vertices and their neighbors (that is, $U$), a triangle star with $n_w$ pendant triangles remains for each $w \in W$, which has an independent set of size $\max(n_w, 1)$.
      Thus, we have $\IS(G) = \sum_{u \in U} n_u + \sum_{w \in W} \max(n_w, 1)$.

      Since $n_u \ge 1$ and $\max(n_w, 1) \ge n_w$, we have $\IS(G) = \sum_{u \in U} n_u + \sum_{w \in W} \max(n_w, 1) \ge |U| + \sum_{w \in W} n_w = \IM(G)$.
      By the assumption that $\IS(G) = \IM(G)$, equality holds and hence $n_u = 1$ for each $u \in U$ and $\max(n_w, 1) = n_w$, that is, $n_w \ge 1$.
  \end{itemize}
  For the third case, note that since $|V(G)| = 2|U| + \sum_{w \in W} (2n_w + 1)$, we have $\MM(G) = \IM(G) = \IS(G) = |U| + \sum_{w \in W} n_w = \frac{1}{2}(|V(G)| - |W|)$.
\end{proof}

We remark that the converse of \Cref{lemma:cw-adapt} holds as well.
We show that an instance $(G, \ell)$ with $\frac{1}{2}(\MM(G) + \IS(G)) = \IM(G)$ can be solved by exhaustively applying reduction rules.

\begin{restatable}{lemma}{allrr}
  \label{lemma:algo:allrr}
  Let $\mathcal{I} = (G, \ell)$ be an instance of \textsc{IMBA} with $\frac{1}{2}(\MM(G) + \IS(G)) = \IM(G)$.
  If $\mathcal{I}$ is a yes-instance, then the instance $\mathcal{I'} = (G', \ell')$ obtained from $\mathcal{I}$ by exhaustively applying \Cref{rr:iso-vertex,rr:iso-edge,rr:ts} has $\ell' = 0$.
\end{restatable}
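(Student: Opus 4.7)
The idea is to reduce to a componentwise analysis and then exploit the structural characterization of \Cref{lemma:cw-adapt}. Combining $\MM(G) \ge \IM(G)$ and $\IS(G) \ge \IM(G)$ with the hypothesis $\frac{1}{2}(\MM(G)+\IS(G)) = \IM(G)$ forces $\MM(G) = \IS(G) = \IM(G)$. Since $\MM$, $\IS$, and $\IM$ are all additive over connected components, this equality passes to every connected component $C$ of $G$, and so \Cref{lemma:cw-adapt} classifies each such $C$ as (a) an isolated edge, (b) a triangle star, or (c) a graph built from a connected bipartite graph on $U \cup W$ by attaching exactly one pendant to each $u \in U$ and at least one pendant triangle to each $w \in W$.

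The central claim I would establish is that any exhaustive application of \Cref{rr:iso-vertex,rr:iso-edge,rr:ts} reduces $G$ to the empty graph. Case (a) is immediate from \Cref{rr:iso-edge}. In case (b), the central triangle of the triangle star is itself a pendant triangle (its two non-center vertices have degree $2$), so \Cref{rr:ts} removes the center~$s$ and leaves $1+t$ isolated edges that are then swept away by \Cref{rr:iso-edge}. Case (c) is the main work: for every $w \in W$ there is a pendant triangle $(a_i, w, b_i)$ whose non-$w$ vertices have degree exactly $2$ (they are adjacent only to $w$ and to each other), so \Cref{rr:ts} applies and deletes~$w$. These degree-$2$ conditions persist through earlier deletions, making this step repeatable for every $w \in W$. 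Once $W$ is exhausted, each former pendant-triangle edge $a_i b_i$ is isolated; moreover each $u \in U$---whose only original neighbors were its unique pendant and some vertices of $W$, since $U$ is independent in $G$---is left adjacent only to its pendant, forming another isolated edge. \Cref{rr:iso-edge} then empties the remaining graph.

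Given the emptying claim, the conclusion follows quickly. The correctness of the reductions (\Cref{lemma:rr:corr,lemma:rr:ts}) ensures that $(G', \ell')$ is a yes-instance iff $(G, \ell)$ is, and the empty graph $G'$ has an induced matching of size $\ell'$ iff $\ell' \le 0$; hence $\ell' \le 0$. Since \Cref{rr:iso-edge} does not decrement $\ell$ below zero, one also has $\ell' \ge 0$, and therefore $\ell' = 0$. The main technical obstacle will be case (c) of the emptying claim: one must verify that each $(a_i, w, b_i)$ remains a pendant triangle until $w$ itself is deleted, and that after removing all of $W$ the only surviving edges are the $|U|$ pendant edges at $U$ and the $\sum_w n_w$ former triangle edges---all isolated---which relies precisely on the fine detail in \Cref{lemma:cw-adapt} that each $u \in U$ receives \emph{exactly one} pendant and that $U$ is an independent set.
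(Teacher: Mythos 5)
Your proof is correct and follows essentially the same route as the paper: classify $G$ via \Cref{lemma:cw-adapt}, show that the three reduction rules empty the graph, and conclude $\ell' = 0$. The only differences are cosmetic---you make the componentwise reduction explicit (the paper's proof implicitly treats $G$ as connected) and derive $\ell' = 0$ from the empty graph being a yes-instance together with the invariant $\ell \ge 0$, rather than by counting applications of \Cref{rr:iso-edge} against $\ell$---both of which are fine.
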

\begin{proof}
  Suppose that $\mathcal{I}$ is a yes-instance with $\frac{1}{2}(\MM(G) + \IS(G)) = \IM(G)$.
  Note that $\IM(G) \ge \ell$.
  We consider three cases of \Cref{lemma:cw-adapt}.
  \begin{itemize}
    \item
    Suppose that $G$ consists of an isolated edge.
    Then, $\ell \le 1$, and hence we have $\ell' = 0$ after an application of \Cref{rr:iso-edge}.
    \item
    Suppose that $G$ is a triangle star.
    Then, $\ell \le \frac{1}{2}(|V(G)| - 1)$, and hence we have $\ell' = 0$ after one application of \Cref{rr:ts} and $\frac{1}{2}(|V(G)| - 1)$ applications of \Cref{rr:iso-edge}.
    \item
    Suppose that $G$ arises from a connected bipartite graph with a bipartition $U \cup W$ as specified in \Cref{lemma:cw-adapt}.
    By \Cref{lemma:cw-adapt}, we have $\IM(G) = \frac{1}{2}(|V(G)| - |W|) \ge \ell$.
    Since every vertex $w \in W$ has at least one pendant triangle attached to it, \Cref{rr:ts} deletes every vertex in $W$.
    Note that after the deletion of $W$, we have a disjoint union of $\frac{1}{2}(|V(G)| - |W|)$ isolated edges.
    It follows that \Cref{rr:iso-edge} applies $\ell$ times, resulting in an instance $\mathcal{I}' = (G', \ell')$ with $\ell' = 0$.
  \end{itemize}
  This concludes the proof.
\end{proof}

Combining all our lemmas, we prove the following:

\main*

\begin{proof}
  We first show that our algorithm is correct.
  Let $\mathcal{S}$ be the set of instances corresponding to leaves in the search tree.
  For the correctness, we must show that if the input $\mathcal{I}$ is yes-instance, then there exists an instance $\mathcal{I}' \in \mathcal{S}$ which results in a termination returning yes (\Cref{tc:yes}).
  Note that each instance $\mathcal{I}' = (G', \ell') \in \mathcal{S}$ fulfills at least one of conditions for \Cref{tc:yes,tc:no,tc:2k}:
  Suppose that the condition for \Cref{tc:2k} not is met.
  Then, $G'$ does not have any vertex of degree at least two, since otherwise we can apply a branching rule in \Cref{ss:algo:br}.
  Thus, every vertex has degree at most one and \Cref{rr:iso-vertex,rr:iso-edge} deletes every vertex in the graph, resulting in \Cref{tc:yes} ($\ell' = 0$) or \Cref{tc:no} ($\ell' > 0$).

  If $\mathcal{I}$ is a no-instance, then every instance $\mathcal{I}' \in \mathcal{S}$ is a no-instance as well by the correctness of our reduction rules (\Cref{lemma:rr:corr,lemma:rr:ts}) and branching rules (\Cref{lemma:corr:ac,lemma:corr:aa,lemma:corr:triangle,lemma:corr:ts,lemma:corr:fc,lemma:corr:naivea}).
  Thus, our algorithm reaches \Cref{tc:no} or \ref{tc:2k}, returning no.
  
  Suppose that $\mathcal{I}$ is a yes-instance.
  Since our reduction rules and branching rules are correct, there exists an instance $\mathcal{I}' = (G', \ell') \in \mathcal{S}$ such that $\mathcal{I}'$ is a yes-instance.
  For the sake of contradiction, assume that our algorithm incorrectly concludes that $\mathcal{I}'$ is a no-instance, i.e., (i) $\ell' > 0$ (recall that we check for \Cref{tc:yes} first) and (ii) $\frac{1}{2} |V(G')| < \ell'$ or there have been $2k$ branching steps (conditions for \Cref{tc:no,tc:2k}, respectively).
  By the assumption that $\mathcal{I}'$ is a yes-instance, we have $\frac{1}{2} |V(G')| \ge \IM(G') \ge \ell'$.
  We thus may assume that branching rules have been applied $2k$ times.
  We have shown that every branching rule decreases the measure (\Cref{lemma:m:ac,lemma:m:aa,lemma:ts,lemma:fc,lemma:m:naivea}) by at least $\frac{1}{2}$ and that the measure does not increase by applying reduction rules (\Cref{lemma:rr:measure}).
  Since the measure is $k$ for the input instance $\mathcal{I}$, we have
  $\mu(\mathcal{I'}) \le k - 2k \cdot \frac{1}{2} = 0$.
  Moreover, we have
  \begin{align*}
    \mu(\mathcal{I}') = \frac{1}{2}(\MM(G') + \IS(G')) - \ell' \ge \frac{1}{2}(\MM(G') + \IS(G')) - \IM(G') \ge 0.
  \end{align*}
  Here, the first inequality follows because $\mathcal{I}'$ is a yes-instance, i.e., $\IM(G') \ge \ell'$, and the second inequality $\frac{1}{2}(\MM(G') + \IS(G')) - \IM(G') = \frac{1}{2}(\MM(G') - \IM(G')) + \frac{1}{2}(\IS(G') - \IM(G')) \ge 0$ holds for any graph $G'$.
  We thus have $\mu(\mathcal{I}') = 0$ and in particular $\frac{1}{2}(\MM(G') + \IS(G')) = \IM(G')$.
  Since \Cref{rr:iso-vertex,rr:iso-edge,rr:ts} are exhaustively applied, we have $\ell' = 0$ by \Cref{lemma:algo:allrr}, a contradiction.
  Thus, our algorithm correctly determines that $\mathcal{I}'$ is a yes-instance.
  
  For the running time, note that each node in the search tree has at most seven children.
  Moreover, the depth of the search tree is at most $2k$ by \Cref{tc:2k}.
  Thus, our algorithm runs in $\Ostar(7^{2k}) = \Ostar(49^k)$ time.
\end{proof}

\section{Hardness for IMBMM and IMBIS}
\label{sec:hard}

Here, we prove the hardness for \textsc{IMBMM} and \textsc{IMBIS}.
Recall that \textsc{IMBMM} is parameterized by $\MM(G) - \ell$ and \textsc{IMBIS} is parameterized by $\IS(G) - \ell$, where $\MM(G)$ is the maximum matching size of $G$, $\IS(G)$ is the maximum independent set size of $G$.
These negative results complement \Cref{thm:main} in answering our main question:
is there a parameterization smaller than $\frac{1}{2}n - \ell$ which admits an FPT algorithm?
Our negative results suggest that using $\MM(G)$ or $\IS(G)$ as an upper bound of $\ell$ fails.

We first show that \textsc{IMBMM} is W[2]-hard.
The hardness holds for 2-degenerate graphs.
This also complements the following two results:
One is an XP algorithm for \textsc{IMBMM} \cite{DBLP:journals/tcs/DuarteJPRS15} and the other is an FPT algorithm for \textsc{IMBMM} where the maximum degree is additionally included as part of the parameter \cite{DBLP:journals/tcs/DuarteJPRS15}.

\begin{restatable}{theorem}{imbmmhard}
  \textsc{IMBMM} is W[2]-hard even on 2-degenerate bipartite graphs.
\end{restatable}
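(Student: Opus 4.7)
The plan is to give a parameterized reduction from the W[2]-hard \textsc{Hitting Set} problem—given a universe $U$, a family $\mathcal{F} \subseteq 2^U$ and an integer $k$, does some $X \subseteq U$ of size at most $k$ meet every $S \in \mathcal{F}$?—to \textsc{IMBMM}, preserving the parameter. Given an instance $(U, \mathcal{F}, k)$, I would build $G$ with (i) a pendant edge $e_up_u$ for each $u \in U$, (ii) a vertex $s_S$ for each $S \in \mathcal{F}$, and (iii) a ``link'' vertex $w_{u,S}$ adjacent to both $e_u$ and $s_S$ for every incidence $u \in S$. I set $\ell := |U| + |\mathcal{F}| - k$, so that the \textsc{IMBMM} parameter $\MM(G) - \ell$ equals the source parameter $k$.

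The graph $G$ is bipartite with sides $\{e_u : u \in U\} \cup \{s_S : S \in \mathcal{F}\}$ and $\{p_u : u \in U\} \cup \{w_{u,S} : u \in S\}$, and $2$-degenerate because peeling the degree-$1$ pendants $p_u$ followed by the (initially) degree-$2$ links $w_{u,S}$ leaves every remaining vertex with degree $0$. For the matching number, $\{e_u : u \in U\} \cup \{s_S : S \in \mathcal{F}\}$ is a vertex cover of $G$, so by K\H{o}nig's theorem (\Cref{thm:konig}) $\MM(G) = \VC(G) \le |U| + |\mathcal{F}|$, and equality is witnessed by $\{e_up_u : u \in U\} \cup \{s_Sw_{u_S,S} : S \in \mathcal{F}\}$ for any choice of $u_S \in S$.

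The heart of the argument is the equivalence: the input is a yes-instance of \textsc{Hitting Set} if and only if $G$ has an induced matching of size $\ell$. For the forward direction, given a hitting set $X$ of size $k$, I pick $u_S \in X \cap S$ for each $S$ and take $M := \{e_up_u : u \in U \setminus X\} \cup \{s_Sw_{u_S,S} : S \in \mathcal{F}\}$, which has size $|U| + |\mathcal{F}| - k = \ell$. The only candidate chord between two matched pairs is $e_uw_{u',S}$ with $u = u'$, but this would force $u \in X \cap (U \setminus X)$; hence $M$ is an induced matching.

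The main obstacle is the backward direction: extracting a hitting set of size $k$ from an induced matching $M$ of size $\ell$. I would partition $M$ by edge type into subsets of the forms $e_up_u$, $s_Sw_{u,S}$, and $e_uw_{u,S}$, with index sets $A$, $B$, and $C$, so that $|M| = |A| + |B| + |C|$. The key consequence of the induced condition is that for every edge $s_Sw_{u_S,S}$ in $M$ the chord $e_{u_S}w_{u_S,S}$ forbids $u_S$ from being an endpoint of any edge of the other two types, so $u_S \in U \setminus A \setminus C$. Setting $X := (U \setminus A \setminus C) \cup Y$, where $Y$ picks one element from each set $S \in \mathcal{F}$ satisfying $S \subseteq A \cup C$, is therefore a hitting set; and since every $S \in B$ is already hit by its $u_S$, at most $|\mathcal{F}| - |B|$ sets need the extra element, yielding $|X| \le (|U| - |A| - |C|) + (|\mathcal{F}| - |B|) = |U| + |\mathcal{F}| - |M| = k$. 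Combined with the W[2]-hardness of \textsc{Hitting Set}, this establishes W[2]-hardness of \textsc{IMBMM} on $2$-degenerate bipartite graphs.
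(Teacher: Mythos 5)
Your reduction is correct, but it takes a genuinely different route from the paper's. The paper reduces from \textsc{Dominating Set} by subdividing every edge of the source graph: an induced matching of size $|V(G)|-\ell$ in the subdivision corresponds to a dominating set of size $\ell$, and the bound $\MM(G')=|V(G)|$ is established via Hall's theorem (which forces mild side assumptions that the source graph be connected and contain a cycle). You instead reduce from \textsc{Hitting Set} via an incidence gadget with pendant edges $e_up_u$, set vertices $s_S$, and degree-two links $w_{u,S}$. Your construction buys a completely elementary computation of the matching number --- the vertex cover $\{e_u\}\cup\{s_S\}$ and the explicit matching of the same size make $\MM(G)=|U|+|\mathcal{F}|$ immediate, with no Hall-type argument --- at the cost of a slightly more involved backward direction, where you must account for three edge types rather than one; your key observation that a type-$B$ edge $s_Sw_{u_S,S}$ forbids $e_{u_S}$ from being matched, so that $U\setminus(A\cup C)$ already hits every set of $B$, is exactly the right way to close the counting, and the bound $|X|\le |U|+|\mathcal{F}|-|M|=k$ goes through. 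Two small points you should make explicit: assume every $S\in\mathcal{F}$ is nonempty (otherwise $s_S$ is isolated, the claimed maximum matching does not exist, and the instance is a trivial no-instance anyway), and note that a hitting set of size strictly less than $k$ (or an induced matching of size more than $\ell$) only helps, so the equivalence is stated for ``at most $k$'' versus ``at least $\ell$.'' Both reductions yield parameter exactly $k$ on 2-degenerate bipartite graphs, so either establishes the theorem.
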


\begin{proof}
  We reduce from \textsc{Dominating Set}, which is W[2]-hard:

  \problem
  {Dominating Set}
  {An undirected graph $G$ and $\ell \in \mathbb{N}$.}
  {Does $G$ have a dominating set $D$ (i.e., $N(D) = V(G) \setminus D$) of size at most $\ell$?}
  {$\ell$}

  Let $\mathcal{I} = (G, \ell)$ be an instance of \textsc{Dominating Set} such that $G$ is connected and $G$ has at least one cycle.
  Let $G'$ be the graph obtained from $G$ by subdividing every edge $vv'$ of $G$ (i.e., add a vertex $v''$, add two edges $vv''$ and $v'v''$, and delete the edge $vv'$).
  Let $U$ denote the set of vertices added by subdivisions.
  Note that every vertex in $U$ corresponds to an edge in $G$.
  It is straightforward to verify that $G$ is 2-degenerate:
  Let $X \subseteq V(G')$.
  If $X \cap U \ne \emptyset$, then $e \in X \cap U$ has degree at most two in $G[X]$.
  Otherwise, we have $X \subseteq U$ and thus $G[X]$ is a graph without any edge.
  Moreover, $G'$ is bipartite with a bipartition $V(G) \cup U$. 
  We show that $G$ has a dominating set of size $\ell$ if and only if $G'$ has an induced matching of size $\ell' = |V(G)| - \ell$.

  First, suppose that $G$ has a dominating set $D$ of size exactly $\ell$.
  By definition, every vertex $v \in V(G) \setminus D$ has at least one neighbor in $D$ in $G$.
  Let $v_D \in D$ be one of such neighbors.
  Now consider a matching $M'$ in $G'$ in which each vertex $v \in V(G) \setminus D$ is matched to the vertex (in $G'$) corresponding to the edge $v v_D$ (in $G$).
  We claim that $M'$ is an induced matching of size $\ell'$.
  Since $G'$ is bipartite, it suffices to verify that for two vertices $v, v' \in V(G) \setminus D$, there is no edge between $v_D$ and $v'$.
  By the choice of $v_D$, we have $N_{G'}(v_D) = \{ v, w \}$, where $w$ is some vertex in $D$, showing that $v_D$ and $v'$ are not adjacent to each other.
  We thus have shown that $M$ is an induced matching.
  Note that $M$ has size $|V(G) \setminus D| = |V(G)| - \ell$.

  Conversely, suppose that $G'$ has an induced matching $M'$ of size exactly $\ell'$.
  Since $G'$ is bipartite, $M'$ covers exactly $\ell'$ vertices of $V(G)$.
  Let $\overline{D} \subseteq V(G)$ be the set of vertices in $V(G)$ covered by $M'$.
  For every vertex $v \in \overline{D}$, let $v' \in U$ be the vertex such that $vv' \in M'$.
  Since $M'$ is an induced matching, $v'$ corresponds to an edge between $v$ and some vertex not in $\overline{D}$.
  This implies that $V(G) \setminus \overline{D}$ is a dominating set of size $|V(G)| - \ell' = \ell$ in $G$.
  
  To establish the W[2]-hardness of \textsc{IMBMM}, it remains to show that the parameter $k = \MM(G') - \ell'$ associated with the instance $(G', \ell')$ is upper-bounded by some function of $\ell$.
  In fact, we show that $k \le \ell$.
  To do so, we show that $\MM(G') = |V(G)|$, i.e., there is a matching in $G'$ that covers every vertex in $V(G)$ using Hall's theorem.
  To apply Hall's theorem, we must show that $|N(S)| \ge |S|$ holds in $G'$ for each $S \subseteq V(G)$.
  Note that $|N(S)|$ in $G'$ equals the number of edges incident to $S$ in $G$.
  For $S \subseteq V(G)$, let $S_1, \dots, S_c$ denote the connected component of $G[S]$.
  We claim that $|N(S_i)| \ge |S_i|$ for each $i \in [c]$.
  If $S_i = V(G)$, then by the assumption that $G$ has at least one cycle, we have $|N(S_i)| = |E(G)| \ge |V(G)| = |S_i|$.
  Assume that $S_i \ne V(G)$.
  For each $i \in [c]$, note that there are at least $|S_i| - 1$ edges in $G[S_i]$.
  Note also that there is at least edge in $G$ where one endpoint is in $S_i$ and the other is in $V(G) \setminus S_i$ (which is nonempty) by the assumption that $G$ is connected.
  Since $|N(S_i)|$ is the number of edges incident to $S_i$ in $G$, we have $|N(S_i)| \ge (|S_i| - 1) + 1 = |S_i|$ for each $i \in [c]$. 
  The sets $N(S_1), \dots, N(S_c)$ are pairwise disjoint, since otherwise for two distinct connected components $S_i$ and $S_j$, an edge $e \in N(S_i) \cap N(S_{j})$ would connect $S_i$ and $S_j$ in $G$.
  Thus, we have $|N(S)| \ge \sum_{i \in [c]} |N(S_i)| \ge \sum_{i \in [c]} |S_i| = |S|$.
  By Hall's theorem, $G'$ has a matching that covers every vertex in $V(G)$, i.e., $\MM(G') = |V(G)|$.
  It follows that $k = \MM(G') - \ell' = |V(G)| - (|V(G)| - \ell) = \ell$.
\end{proof}

We then show that \textsc{IMBIS} is NP-hard even if the parameter is zero.
Our hardness holds even if a maximum independent set is given as part of input.

\begin{restatable}{theorem}{imbishard}
  \textsc{IMBIS} is NP-hard for $k = 0$ even if an independent set of maximum size is given as part of input.
\end{restatable}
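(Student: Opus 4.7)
The plan is to reduce from the \textsc{Induced Matching} problem on bipartite graphs, which is NP-hard. Given a bipartite instance $(H,\ell)$ with $H=(U,W,E)$, I will construct, in polynomial time, an IMBIS instance $(G,\ell')$ together with an explicit maximum independent set $I$ of $G$ of size $\ell'$, such that $\IM(G)\ge\ell'=|I|$ iff $\IM(H)\ge\ell$. Since $\IM(G)\le\IS(G)=|I|=\ell'$ by construction, this is exactly an instance of IMBIS with $k=0$, and the bipartite structure ensures that the provided $I$ is indeed a maximum independent set.

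First I will invoke K\H{o}nig's theorem (\Cref{thm:konig}) to compute, in polynomial time, a maximum matching and a maximum independent set $I_H$ of $H$; set $s:=|I_H|=|V(H)|-\MM(H)$. If $s<\ell$ then $\IM(H)\le s<\ell$, so I can output a trivial no-instance. Otherwise I will transform $H$ into $G$ so that $\IS(G)=\ell$ while preserving the induced matching answer. The easiest subcase is $s=\ell$, where I simply take $G:=H$ and $I:=I_H$; then $\IS(G)=\ell$ and the question ``$\IM(G)\ge\ell$'' is identical to the original.

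To cover the general case, I will reduce from the restricted variant of bipartite \textsc{Induced Matching} in which the additional promise $\IS(H)=\ell$ holds, and then simply take $G:=H$. To establish NP-hardness of this restricted variant, I will take any standard reduction producing a bipartite \textsc{Induced Matching} instance $(H',\ell)$ and augment it with small pendant gadgets attached at vertices that do not lie on any potential maximum induced matching; each pendant edge increases $\MM$ by one and leaves $\IM$ unchanged, so by iterating I can force $\MM(H)=|V(H)|-\ell$ and thus, via K\H{o}nig, $\IS(H)=\ell$ exactly, while preserving the original induced-matching answer.

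The hard part will be verifying that the pendant augmentation is IM-preserving: I must argue that every maximum induced matching of $H$ avoids the newly added pendant vertices, and conversely that adding these pendants does not create any new induced matching of size $\ell$. I plan to handle this by locating, within the starting reduction, a known ``safe'' set of vertices that are never in any size-$\ell$ induced matching (for instance, high-degree hub vertices in variable- or clause-gadgets), and attaching the pendants there; then the inductive augmentation preserves $\IM$ step by step. Once this is in place, the max IS of the final $G$ is read off directly from K\H{o}nig's theorem and supplied as part of the IMBIS input, which completes the NP-hardness of IMBIS at $k=0$ even with the maximum independent set given.
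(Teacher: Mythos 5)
Your reduction cannot work as described, for two reasons. The immediate, mechanical one: you need to bring $\IS(H)$ \emph{down} to $\ell$, but attaching pendant vertices only adds vertices to the graph, and $\IS$ is monotone non-decreasing under vertex addition --- any independent set of the original graph survives in the augmented one. Your own bookkeeping shows this: each pendant increases $|V(H)|$ by one and increases $\MM(H)$ by \emph{at most} one, so $\IS(H) = |V(H)| - \MM(H)$ (K\H{o}nig--Gallai for bipartite graphs) never decreases. Hence if the starting instance has $\IS(H') = s > \ell$, no amount of pendant augmentation yields $\IS(H) = \ell$, and the case $s = \ell$ that you handle directly is not one you can force.

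The deeper obstruction is that the target of your reduction --- bipartite \textsc{IMBIS} instances with $k = 0$ --- is polynomial-time solvable, so no correct reduction from an NP-hard problem into that class can exist. Indeed, $k = 0$ means $\IS(G) = \ell$, and since $\IM(G) \le \IS(G)$ always, a yes-instance must satisfy $\IM(G) = \IS(G)$. For bipartite $G$ one has $\IS(G) \ge \frac{1}{2}|V(G)| \ge \MM(G) \ge \IM(G)$, so equality throughout forces $\frac{1}{2}(\MM(G) + \IS(G)) = \IM(G)$; by \Cref{lemma:cw-adapt} every connected component is then an isolated edge, a triangle star, or the pendant-triangle type --- and the latter two contain triangles, hence are not bipartite. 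So bipartite yes-instances with $k=0$ are just disjoint unions of isolated edges (plus isolated vertices), which is trivially recognizable. Any hardness proof must therefore leave the bipartite world. The paper does so by reducing from \textsc{Multicolored Independent Set}: the vertex set is partitioned into $\ell$ cliques $C_1, \dots, C_\ell$, one adds a new vertex $v_i$ with $N(v_i) = C_i$ for each $i$, and the clique cover by the sets $C_i \cup \{v_i\}$ certifies $\IS(G') \le \ell$ while $\{v_1, \dots, v_\ell\}$ is an explicit independent set of size $\ell$, giving $k = 0$ with the maximum independent set in hand. That trick --- controlling $\IS$ from above via a clique cover rather than via K\H{o}nig --- is the idea your proposal is missing.
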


\begin{proof}
  We reduce from the NP-hard \textsc{Multicolored Independent Set} problem:
  Its input is an undirected graph $G$ and $\ell \in \mathbb{N}$.
  Additionally, we are given a partition $(C_1, \dots, C_\ell)$ of $V(G)$ into $\ell$ cliques.
  The task is to determine whether $G$ has an independent set of size $\ell$.
  Consider the graph $G'$ obtained by introducing $\ell$ vertices $v_1, \dots, v_\ell$ and adding edges such that $N(v_i) = C_i$ for every $i \in [\ell]$.
  It is not difficult to see that $G$ has an independent set of size $\ell$ if and only if $G'$ has an induced matching of size $\ell$.
  Note that the set $\{ v_1, \dots, v_\ell \}$ is an independent set of size $\ell$ in $G'$ and that this is of maximum cardinality since the vertex set can be partitioned into $\ell$ cliques.
  Thus, we have $k = \ell - \IM(G) = 0$.
  Note that our hardness holds even if $\{ v_1, \dots, v_\ell \}$ is given as part of input for \textsc{Induced Matching}.
\end{proof}

\section{Conclusion}

In this work, we discovered a new parameter $\frac{1}{2}(\MM(G) + \IS(G)) - \ell$ for which \textsc{Induced Matching} is fixed-parameter tractable.
This parameter is smaller than below trivial guarantee $\frac{1}{2} n - \ell$.
Our main result states that \textsc{Induced Matching} is solvable in $\Ostar(49^k)$ time for $k = \frac{1}{2}(\MM(G) + \IS(G)) - \ell$.
This stands in contrast to our negative results: the W[2]-hardness when parameterized by $\MM(G) - \ell$ and the NP-hardness for $\IS(G) - \ell = 0$

There remain several natural questions for future research.
First, is \textsc{Induced Matching} fixed-parameter tractable for an even smaller parameter?
We remark that our negative results in \Cref{sec:hard} indicates that the upper bound cannot be as tight as the maximum matching size or the maximum independent set size.
Another question is whether the base 49 in the running time of our algorithm can be lowered.
It would also be interesting to study whether \textsc{Induced Matching} has a polynomial kernel with respect to $k = \frac{1}{2}(\MM(G) + \IS(G)) - \ell$.

To the best of our knowledge, we are the first to propose the below (or above) guarantee parameterization, where the bound from which the parameter is derived is the average of two values.
We believe that this framework will be successful for other problems as well.

\bibliographystyle{plain}
\bibliography{literature}

\end{document}